\numberwithin{equation}{section} \setlength{\textwidth} {15cm}
\def\beq{\begin{equation}}
\def\eeq{\end{equation}}
\def\bR{ {{\mathbb{R}}}}
\def\Tr{ {{\rm{Tr}}} }
\newcommand{\sgn}{{\rm sgn}}
\newcommand{\pk}[1]{p_{\kappa}}
\newcommand{\ol}{\overline{l}}
\newcommand{\ul}{\underline{l}}
\newcommand{\oL}{\overline{L}}
\newcommand{\uL}{\underline{L}}
\newcommand{\on}{\overline{n}}
\newcommand{\un}{\underline{n}}
\newtheorem{defn}{{\bf Definition}}[section]
\newtheorem{thm}[defn]{{\bf Theorem}}
\newtheorem{cor}[defn]{{\bf Corollary}}
\newtheorem{lem}[defn]{{\bf Lemma}}
\newtheorem{rem}[defn]{{\bf Remark}}
\newtheorem{notation}[defn]{Notation}
\newenvironment{proof}[1][Proof]{\textbf{#1.} }{\hfill \rule{0.5em}{0.5em}}
\begin{document}

\title{Quantized Curvature in Loop Quantum Gravity}
\author{Adrian P. C. Lim \\
Email: ppcube@gmail.com
}

\date{}

\maketitle

\begin{abstract}
A hyperlink is a finite set of non-intersecting simple closed curves in $\mathbb{R} \times \mathbb{R}^3$. Let $S$ be an orientable surface in $\mathbb{R} \times \mathbb{R}^3$. The Einstein-Hilbert action $S(e,\omega)$ is defined on the vierbein $e$ and a $\mathfrak{su}(2)\times\mathfrak{su}(2)$-valued connection $\omega$, which are the dynamical variables in General Relativity. Define a functional $F_S(\omega)$, by integrating the curvature $d\omega + \omega \wedge \omega$ over the surface $S$, which is $\mathfrak{su}(2)\times\mathfrak{su}(2)$-valued. We integrate $F_S(\omega)$ against a holonomy operator of a hyperlink $L$, disjoint from $S$, and the exponential of the Einstein-Hilbert action, over the space of vierbeins $e$ and $\mathfrak{su}(2)\times\mathfrak{su}(2)$-valued connections $\omega$. Using our earlier work done on Chern-Simons path integrals in $\mathbb{R}^3$, we will write this infinite dimensional path integral as the limit of a sequence of Chern-Simons integrals. Our main result shows that the quantized curvature can be computed from the linking number between $L$ and $S$.
\end{abstract}

\hspace{.35cm}{\small {\bf MSC} 2010: } 83C45, 81S40, 81T45, 57R56 \\
\indent \hspace{.35cm}{\small {\bf Keywords}: Curvature, Path integral, Einstein-Hilbert, Loop representation, Quantum gravity}




\section{Quantization of curvature}\label{s.intro}

In the quantization of area and volume as done in \cite{rovelli1995discreteness}, they used canonical quantization on the dynamical Ashtekar variables to derive area and volume operators. Such a method will no longer work for curvature, as the curvature involves derivatives of the dynamical Ashtekar variables.

One way to overcome this is to quantize the holonomy of a connection. This is done by expressing the curvature as the limit of the holonomy around a loop, divided by the area enclosed by the loop, as the area shrinks to zero. The details can be found in \cite{PhysRevLett.96.141301}.

However, as shown in \cite{EH-Lim03}, in loop quantum gravity, the area should be interpreted as a quantized operator, possibly taking the value 0 in its range. Thus, one will run into serious problems in trying to define the reciprocal of an area operator. The authors in \cite{PhysRevD.73.124038} also knowledged that the limit of the above procedure do not exist, as the eigenvalues of the area operator is discrete.

Hence, we will use a completely different approach to quantize curvature, via path integrals. By writing a consistent set of rules, one can make sense and define a path integral, that involves an area, volume and curvature functional. See \cite{EH-Lim02}. We will now summarize this path integral quantization of curvature.

Consider $\bR^4 \equiv \bR \times \bR^3$, whereby $\bR$ will be referred to as the time-axis and $\bR^3$ is the spatial 3-dimensional Euclidean space. Fix the standard coordinates on $\bR^4\equiv \bR \times \bR^3 $, with time coordinate $x_0$ and spatial coordinates $(x_1, x_2, x_3)$.

Let $\{e_i\}_{i=1}^3$ be the standard basis in $\bR^3$. And $\Sigma_i$ is the plane in $\bR^3$, containing the origin, whose normal is given by $e_i$. Let $\pi_0: \bR^4 \rightarrow \bR^3$ denote this projection and for $i=1, 2, 3$, $\pi_i: \bR^4 \rightarrow \bR \times \Sigma_i$ denote a projection.

Let $\tilde{V} \rightarrow \bR \times \bR^3$ be a 4-dimensional vector bundle, with structure group $SO(3,1)$. This implies that $\tilde{V}$ is endowed with a metric, $\eta^{ab}$, of signature $(-, +, +, +)$, and a volume form $\epsilon_{abcd}$.

Suppose $\tilde{V}$ has the same topological type as $T\bR^4$, so that isomorphisms between $\tilde{V}$ and $T\bR^4$ exist. Hence we may assume that $\tilde{V}$ is a trivial bundle over $\bR^4$. That is, $\tilde{V} \equiv \bR^4 \times V \rightarrow \bR^4$ will be our trivial bundle in consideration, for some fixed vector space $V$ of dimension 4.

However, there is no natural choice of an isomorphism. A vierbein $e$ is a choice of isomorphism between $T\bR^4$ and $\tilde{V}$. It may be regarded as a $\tilde{V}$-valued one form, obeying a certain condition of invertibility. A spin connection $\omega_{\mu \gamma}^a$ on $\tilde{V}$, is anti-symmetric in its indices $\mu$, $\gamma$. It takes values in $\Lambda^2 (\tilde{V})$, whereby $\Lambda^k (\tilde{V})$ denotes the $k$-th antisymmetric tensor power or exterior power of $\tilde{V}$. The isomorphism $e$ and the connection $w$ can be regarded as the dynamical variables of General Relativity.

\begin{notation}
Indices such as $a,b,c, d$ and greek indices such as $\mu,\gamma, \alpha, \beta$ will take values from 0 to 3. We will use the greek indices to index the basis in $\Lambda^2(V)$.

Indices labeled $i, j, k$,  will only take values from 1 to 3. These indices will keep track of the spatial coordinate $x_i$.
\end{notation}

The curvature tensor is then defined as
\beq R_{\mu\gamma}^{ab} = \partial_a \omega_{\mu \gamma}^{b} - \partial_b \omega_{\mu \gamma}^{a} + [\omega^a, \omega^b]_{\mu\gamma},\ \partial_a \equiv \partial/\partial x_a, \label{e.r.1} \eeq
or as $R = d\omega + \omega \wedge \omega$. It can be regarded as a two form with values in $\Lambda^2 (\tilde{V})$.

Throughout this article, we adopt Einstein's summation convention, that is, we sum over repeated superscripts and subscripts, unless stated otherwise. Using the above notations, the Einstein-Hilbert action is written as \beq S_{EH}(e, \omega) = \frac{1}{8}\int_{\bR^4}\epsilon^{\mu \gamma \alpha \beta}\epsilon_{abcd}\ e_\mu^{a}e_\gamma^b R_{\alpha\beta}^{cd}. \label{e.eh.2} \eeq Here, $\epsilon^{\mu \gamma \alpha \beta} \equiv \epsilon_{\mu \gamma \alpha \beta}$ is equal to 1 if the number of transpositions required to permute $(0123)$ to $(\mu\gamma\alpha\beta)$ is even; otherwise it takes the value -1.

The expression $e \wedge e \wedge R$ is a four form on $\bR^4$ taking values in $\tilde{V} \otimes \tilde{V} \otimes \Lambda^2(\tilde{V})$ which maps to $\Lambda^4(\tilde{V})$. But $\tilde{V}$ with the structure group $SO(3,1)$ has a natural volume form, so a section of $\Lambda^4(\tilde{V})$ may be canonically regarded as a function. Thus Equation (\ref{e.eh.2}) is an invariantly defined integral.
By varying Equation (\ref{e.eh.2}) with respect to $e$, we will obtain the Einstein equations in vacuum. See \cite{Witten:1988hc}.

The metric $\eta^{ab}$ on $\tilde{V}$, together with the isomorphism $e$ between $T\bR^4$ and $\tilde{V}$, gives a (non-degenerate) metric $g^{ab} = e_\mu^a e_\gamma^b \eta^{\mu\gamma}$ on $T\bR^4$. By varying Equation (\ref{e.eh.2}) with respect to the connection $\omega$, we will obtain an equation that identifies $\omega$ as the Levi-Civita connection associated with the metric $g^{ab}$.

Fix a basis $\{E^\gamma\}_{\gamma=0}^3$ in $V$. Write $E^{\gamma \mu}= E^\gamma \wedge E^\mu \in \Lambda^2(V)$, thus $\{E^{\gamma \mu}\}_{0\leq \gamma<\mu\leq 3}$ is a basis for $\Lambda^2(V)$. Let $\mathfrak{su}(2)$ be the Lie Algebra of $SU(2)$. Choose the following basis for $\mathfrak{su}(2)$, \beq \breve{e}_1 :=
\frac{1}{2}\left(
  \begin{array}{cc}
    0 &\ 1 \\
    -1 &\ 0 \\
  \end{array}
\right),\ \ \breve{e}_2 :=
\frac{1}{2}\left(
  \begin{array}{cc}
    0 &\ i \\
    i &\ 0 \\
  \end{array}
\right),\ \ \breve{e}_3 :=
\frac{1}{2}\left(
  \begin{array}{cc}
    i &\ 0 \\
    0 &\ -i \\
  \end{array}
\right). \nonumber \eeq Write
\begin{align*}
\hat{E}^{01} = (\breve{e}_1, 0),\ \ \hat{E}^{02} = (\breve{e}_2, 0), \ \ \hat{E}^{03} = (\breve{e}_3, 0), \\
\hat{E}^{23} = (0, \breve{e}_1),\ \ \hat{E}^{31} = ( 0, \breve{e}_2), \ \ \hat{E}^{12} = (0, \breve{e}_3),
\end{align*}
all inside $\mathfrak{su}(2) \times \mathfrak{su}(2)$ and also denote
\beq \hat{E}^{\tau(1)} = \hat{E}^{23}, \ \ \hat{E}^{\tau(2)} = \hat{E}^{31}, \ \ \hat{E}^{\tau(3)} = \hat{E}^{12}. \nonumber \eeq
We will also write $\hat{E}^{\alpha\beta} = -\hat{E}^{\beta\alpha}$.

For $A, B, C, D \in \mathfrak{su}(2)$, we define the Lie bracket on $\mathfrak{su}(2) \times \mathfrak{su}(2)$ as \beq [(A,B), (C,D)] = ([A,C], [B, D]) \in \mathfrak{su}(2) \times \mathfrak{su}(2). \nonumber \eeq

We can map $\Lambda^2(V)$ to the Lie Algebra $\mathfrak{su}(2) \times \mathfrak{su}(2)$ via a linear map $\xi$ as follows,
\beq E^{01} \mapsto (\breve{e}_1,0) \equiv \hat{E}^{01},\ E^{02} \mapsto (\breve{e}_2,0) \equiv \hat{E}^{02},\ E^{03} \mapsto (\breve{e}_3,0) \equiv \hat{E}^{03} \nonumber \eeq and \beq E^{23} \mapsto (0,\hat{e}_1) \equiv \hat{E}^{23},\ E^{31} \mapsto (0,\hat{e}_2) \equiv \hat{E}^{31},\ E^{12} \mapsto (0,\hat{e}_3) \equiv \hat{E}^{12}. \nonumber \eeq 

Let $\Lambda^q(\bR^4)$ denote the q-th exterior power of $\bR^4$ and using the standard coordinates on $\bR^4$, we choose the canonical basis $\{dx_0, dx_1, dx_2, dx_3\}$ for $\Lambda^1(\bR^4)$. And we choose $\{dx_1, dx_2, dx_3\}$ to be a basis for the subspace  $\Lambda^1(\bR^3)$ in $\Lambda^1(\bR^4)$. A basis for $\Lambda^2(\bR^4)$ is hence given by \beq \{dx_0 \wedge dx_1, dx_0 \wedge dx_2, dx_0 \wedge dx_3, dx_2\wedge dx_3, dx_3 \wedge dx_1, dx_1 \wedge dx_2\}. \nonumber \eeq

Let $\overline{\mathcal{S}}_\kappa(\bR^4) \subset L^2(\bR^4)$ be a Schwartz space, as defined in \cite{EH-Lim02} and define
\begin{align*}
L_\omega :=& \overline{\mathcal{S}}_\kappa(\bR^4) \otimes \Lambda^1(\bR^3)\otimes \mathfrak{su}(2) \times \mathfrak{su}(2), \\
L_e :=& \overline{\mathcal{S}}_\kappa(\bR^4) \otimes \Lambda^1(\bR^3)\otimes V.
\end{align*}

The vierbein $e$ described earlier can be written as \beq e = B^i_\gamma \otimes dx_i\otimes E^\gamma \in \overline{\mathcal{S}}_\kappa(\bR^4) \otimes \Lambda^1(\bR^3) \otimes V = L_e. \nonumber \eeq

Instead of writing the spin connection $\omega$ to be $\Lambda^2(V)$-valued, we will write it to be $\mathfrak{su}(2) \times \mathfrak{su}(2)$-valued, i.e.
\begin{align*}
\omega =& A^i_{\alpha\beta} \otimes dx_i\otimes \hat{E}^{\alpha\beta} \in \overline{\mathcal{S}}_\kappa(\bR^4) \otimes \Lambda^1(\bR^3)\otimes \mathfrak{su}(2) \times \mathfrak{su}(2)= L_\omega.
\end{align*}
There is an implied sum over repeated indices. This $\omega$ is now a $\mathfrak{su}(2) \times \mathfrak{su}(2)$-valued one form. Its curvature form $R = d\omega + \omega \wedge \omega$ is now a $\mathfrak{su}(2) \times \mathfrak{su}(2)$-valued two form.

\begin{rem}
\begin{enumerate}
  \item Note that $A^{i}_{\alpha\beta} = -A^{i}_{\beta\alpha}\in \overline{\mathcal{S}}_\kappa(\bR^4)$.
  \item The authors in \cite{PhysRevD.73.124038} also considered the spin connection to be $\mathfrak{su}(2)$-valued.
  \item Even though the above curvature form $R$ is $\mathfrak{su}(2) \times \mathfrak{su}(2)$-valued, do note that its components $\{R_{\mu\gamma}^{ab}\}$ are indeed the components of the curvature of the vector bundle $\tilde{V} \rightarrow \bR^4$. Since $\tilde{V}$ is isomorphic to $T\bR^4$, therefore quantizing $R$ is equivalent to quantizing the curvature of $T\bR^4$.
\end{enumerate}
\end{rem}

In \cite{EH-Lim02}, after applying axial gauge fixing, as ($\partial_0 \equiv \partial/\partial x_0$), the Einstein-Hilbert action defined in Equation (\ref{e.eh.2}) now becomes
\begin{align*}
S_{EH}(e, \omega) :=
\frac{1}{8}&\int_{\bR^4}\epsilon^{abcd}B^1_\gamma B^2_\mu[\hat{E}^{\gamma \mu}]_{ab} \cdot \partial_0 A^3_{\alpha\beta}[\hat{E}^{\alpha\beta}]_{cd} dx_1\wedge dx_2 \wedge dx_0 \wedge dx_3\\
+& \frac{1}{8}\int_{\bR^4}\epsilon^{abcd}B^2_\gamma B^3_\mu[\hat{E}^{\gamma \mu}]_{ab} \cdot \partial_0 A^1_{\alpha\beta}[\hat{E}^{\alpha\beta}]_{cd} dx_2\wedge dx_3 \wedge dx_0 \wedge dx_1\\
+&\frac{1}{8}\int_{\bR^4}\epsilon^{abcd}B^3_\gamma B^1_\mu[\hat{E}^{\gamma \mu}]_{ab} \cdot \partial_0 A^2_{\alpha\beta}[\hat{E}^{\alpha\beta}]_{cd} dx_3\wedge dx_1 \wedge dx_0 \wedge dx_2.
\end{align*}

\begin{rem}
Note that a similar gauge fixing was also carried out in the analysis in \cite{PhysRevD.73.124038}, for the case of $\bR^3$.
\end{rem}

A simple closed curve in $\bR^4$ will be referred to as a loop and a set of simple closed curves in $\bR^4$, $L$ will be referred to as a hyperlink. A hyperlink is oriented if we assigned an orientation to its components. We say $L$ is a time-like hyperlink, if given any 2 distinct points $p\equiv (x_0, x_1, x_2, x_3), q\equiv (y_0, y_1, y_2, y_3) \in L$, $p \neq q$, we have
\begin{itemize}
  \item $\sum_{i=1}^3(x_i - y_i)^2 > 0$;
  \item if there exists $i, j$, $i \neq j$ such that $x_i = y_i$ and $x_j = y_j$, then $x_0 - y_0 \neq 0$.
\end{itemize}
Throughout this article, all our hyperlinks in consideration will be time-like. By definition,
it is necessary that $\pi_a(L)$, $a=0, 1, 2, 3$, are all links inside their respective 3-dimensional subspace $\pi_a(\bR^4)\subset \bR^4$ for a time-like hyperlink $L$. See \cite{EH-Lim06}.

\begin{rem}
We wish to point out that for a time-like hyperlink, we are not saying that any distinct points $p, q \in \bR^4$ in the hyperlink must be (Minkowski) time-like separated. In fact, in quantum gravity, there is no preferred choice of metric, so saying that two points are time-like separated makes absolutely no sense at all.
\end{rem}

Consider 2 different hyperlinks $\uL\equiv \{\ul^v\}_{v=1}^{\un}$ and $\oL \equiv \{\ol^u\}_{u=1}^{\on}$ in $\bR^4$, which will be termed as geometric and matter hyperlink respectively. The symbols $u, \bar{u}, v, \bar{v}$ will be indices, taking values in $\mathbb{N}$. They will keep track of the loops in our hyperlinks $\uL$ and $\oL$. The symbols $\un$ and $\on$ will always refer to the number of components in $\uL$ and $\oL$ respectively.

Given two hyperlinks $\oL$ and $\uL$, we also assume that together (by using ambient isotopy if necessary), they form another hyperlink with $\on + \un$ components. Denote this new hyperlink by $\chi(\oL, \uL) \equiv \chi(\{\ol^u\}_{u=1}^{\on}, \{\ul^v\}_{v=1}^{\un})$, assumed to be time-like.

Consider an oriented hyperlink $\chi(\ol, \ul)$, made up of 2 distinct oriented loops $\ol$ and $\ul$. In \cite{EH-Lim03} or \cite{EH-Lim06}, we defined the hyperlinking number between $\ol$ and $\ul$ in $\bR \times\bR^3$, denoted as ${\rm sk}(\ol, \ul)$, to distinguish from the linking number between 2 simple closed curves in $\bR^3$. More generally, define for each $u=1, \ldots, \on$, \beq {\rm sk}(\ol^u, \uL):= \sum_{v=1}^{\un}{\rm sk}(\ol^u, \ul^v), \nonumber\eeq calculated from $\chi(\oL, \uL)$.

\begin{rem}
The hyperlinking number is well-defined for time-like loops. See \cite{EH-Lim03}.
\end{rem}

Using the above basis, define \beq \mathcal{E}^\pm := \sum_{i=1}^3\breve{e}_i \in \mathfrak{su}(2). \label{e.sux.1} \eeq

Let $\rho^\pm: \mathfrak{su}(2) \rightarrow {\rm End}(V^\pm)$ be an irreducible finite dimensional representation, indexed by half-integer and integer values $j_{\rho^\pm} \geq 0$. The representation $\rho: \mathfrak{su}(2) \times \mathfrak{su}(2) \rightarrow {\rm End}(V^+) \times {\rm End}(V^-)$ will be given by $\rho = (\rho^+, \rho^-)$, with \beq \rho: \alpha_i\hat{E}^{0i} + \beta_j \hat{E}^{\tau(j)} \mapsto \left(\sum_{i=1}^3\alpha_i \rho^+(\breve{e}_i) , \sum_{j=1}^3\beta_j \rho^-(\breve{e}_j) \right). \nonumber \eeq By abuse of notation, we will now write $\rho^+ \equiv (\rho^+, 0)$ and $\rho^- \equiv (0, \rho^-)$ in future and thus $\rho^+(\hat{E}^{0i}) \equiv \rho^+(\breve{e}_i)$, $\rho^-(\hat{E}^{\tau(j)}) \equiv \rho^-(\breve{e}_j)$. Without loss of generality, we also assume that $\rho^\pm(\hat{E})$ is skew-Hermitian for any $\hat{E} \in \mathfrak{su}(2)$.

Color the matter hyperlink, i.e. choose a representation $\rho_u: \mathfrak{su}(2) \times \mathfrak{su}(2) \rightarrow {\rm End}(V_u^+) \times {\rm End}(V_u^-)$ for each component $\ol^u$, $u=1, \ldots, \on$, in the hyperlink $\oL$. Note that we do not color $\uL$, i.e. we do not choose a representation for $\uL$.

Let $q \in \bR$ be known as a charge. Define
\begin{align*}
V(\{\ul^v\}_{v=1}^{\underline{n}})(e) :=& \exp\left[ \sum_{v=1}^{\un} \int_{\ul^v} \sum_{\gamma=0}^3 B^i_\gamma \otimes dx_i\right], \\
W(q; \{\ol^u, \rho_u\}_{u=1}^{\on})(\omega) :=& \prod_{u=1}^{\on}\Tr_{\rho_u}  \mathcal{T} \exp\left[ q\int_{\ol^u} A^i_{\alpha\beta} \otimes dx_i\otimes \hat{E}^{\alpha\beta}  \right].
\end{align*}
Here, $\mathcal{T}$ is the time-ordering operator as defined in \cite{CS-Lim02} and $\Tr_{\rho}$ means taking the trace with respect to the representation $\rho$. We sum over repeated indices, with $i$ taking values in 1, 2 and 3; $\alpha\neq \beta$ take values in 0, 1, 2, 3 and $B^i_\gamma, A^i_{\alpha\beta} \in \overline{\mathcal{S}}_\kappa(\bR^4)$.

\begin{rem}
\begin{enumerate}
  \item The term $\mathcal{T} \exp\left[ q\int_{\ol^u} \omega \right]$ is known as a holonomy operator along a loop $\ol^u$, for a spin connection $\omega$.
  \item Witten in \cite{MR990772} used the above holonomy operator in $\bR^3$ to obtain knot invariants in $\bR^3$, by quantizing the Chern-Simons theory, using the path integral approach.
  \item The authors in \cite{PhysRevLett.61.1155} also defined a path integral, involving the holonomy operator, to quantize gravity and they argued that the result should be related to knot invariants.
\end{enumerate}

\end{rem}

Fix a closed and bounded orientable surface, with or without boundary, denoted by $S$, inside $\bR \times\bR^3$. We can write $S$ as a finite disjoint union of surfaces, and without any loss of generality, we assume we can project each such smaller surface to be a surface inside $\bR^3$ or in $\bR \times \Sigma_i$, $i=1,2,3$. Furthermore, we insist the geometric hyperlink $\uL$ is disjoint with $S$ and $\pi_a(\uL)$ intersect at most finitely many points inside $\pi_a(S)$, $a=0, 1, 2, 3$.

Define the following functional, taking values in $\mathfrak{su}(2) \times \mathfrak{su}(2)$,
\begin{align}
F_S(\omega) :=& \frac{1}{2}\int_S \frac{\partial A_{\alpha\beta}^i}{\partial x_0} \otimes dx_0 \wedge dx_i \otimes \hat{E}^{\alpha\beta} + \frac{\partial A_{\alpha \beta}^j}{\partial x_i} \otimes dx_i \wedge dx_j \otimes \hat{E}^{\alpha \beta} \nonumber \\
&\ \ \ \ \ \ + A_{\alpha\beta}^iA_{\gamma\mu}^j\otimes dx_i \wedge dx_j \otimes [\hat{E}^{\alpha\beta}, \hat{E}^{\gamma\mu}]. \label{e.r.2}
\end{align}
Note that we do not specify any representation for $\hat{E}^{\alpha\beta}$ and the functional $F_S(\omega)$, referred to as the total curvature of $S$, is defined by integrating the curvature 2-form $R$, over a surface $S$.

We are going to quantize $F_S$ into an operator, via the path integral, given by \beq \frac{1}{Z}\int_{\omega \in L_\omega,\ e \in L_e}F_S(\omega) V(\{\ul^v\}_{v=1}^{\un})(e) W(q; \{\ol^u, \rho_u\}_{u=1}^{\on})(\omega)  e^{i S_{EH}(e, \omega)}\ De D\omega, \label{e.ehc.1} \eeq whereby $De$ and $D\omega$ are Lebesgue measures on $L_e$ and $L_\omega$ respectively and \beq Z = \int_{\omega \in L_\omega,\ e \in L_e}e^{i S_{EH}(e, \omega)}\ De D\omega. \nonumber \eeq Notice that we are going to average $F_S$ over all possible $\omega \in L_\omega$, weighted against the exponential of the Einstein-Hilbert action.

\begin{rem}
\begin{enumerate}
  \item A similar path integral expression is given in \cite{PhysRevLett.61.1155}.
  \item We do not specify the representation in $F_S$ and so the path integral takes values in $\mathfrak{su}(2) \times \mathfrak{su}(2)$.
  \item In quantum gravity, curvature is treated as a dynamical variable. A quantized curvature operator at each point $p \in \bR^4$ is not defined, as commented in \cite{PhysRevD.73.124038}.
  \item If we regard the curvature as field, then it was explained in \cite{streater} that the curvature $R$ do not yield a well-defined operator; rather only smeared fields will yield quantized operators. In this case, we smear the curvature over a surface $S$, given by $F_S$.
\end{enumerate}

\end{rem}

In \cite{EH-Lim02}, we derived and defined the path integral given by Expression \ref{e.ehc.1} as \beq \lim_{\kappa \rightarrow \infty}\left[(\bar{A}_\kappa^+ + \bar{A}_\kappa^-) + \bar{B}_\kappa + (\bar{C}_\kappa^+ + \bar{C}_\kappa^-)\right]\prod_{u=1}^{\on}\left[ \Tr_{\rho_u^+}\ \hat{\mathcal{W}}_\kappa^+(q; \ol^u, \uL) + \Tr_{\rho_u^-}\ \hat{\mathcal{W}}_\kappa^-(q; \ol^u, \uL) \right], \label{e.sux.4} \eeq the terms $\bar{A}_\kappa^\pm$, $\bar{B}_\kappa$ and $\bar{C}_\kappa^\pm$ will all be defined by Equations (\ref{e.c.4a}), (\ref{e.c.4c}) and (\ref{e.c.4d}) respectively.

Refer to Notations \ref{n.n.3} and \ref{n.k.1}. Note that $\hat{\mathcal{W}}_\kappa^\pm(q; \ol^u, \uL)$ was defined in \cite{EH-Lim02} as ($d\hat{s} \equiv dsd\bar{s}$)
\begin{align}
\hat{\mathcal{W}}_\kappa^\pm(q; \ol^u, \uL) := \exp\left[ \mp\frac{iq}{4}\frac{\kappa^3}{4\pi}\sum_{v=1}^{\underline{n}}\int_{I^2}\ d\hat{s}\ \epsilon^{ijk}\left\langle  p_\kappa^{\vec{y}^u_s}, p_\kappa^{\vec{\varrho}^v_{\bar{s}}}\right\rangle_k  y^{u,\prime}_{i,s}\varrho^{v,\prime}_{j,\bar{s}} \otimes \mathcal{E}^\pm\right], \label{e.h.5}
\end{align}
whereby $\mathcal{E}^\pm$ was defined in Equation (\ref{e.sux.1}). And $\epsilon^{ijk} \equiv \epsilon_{ijk}$ be defined on the set $\{1,2,3\}$, by \beq \epsilon^{123} = \epsilon^{231} = \epsilon^{312} = 1,\ \ \epsilon^{213} = \epsilon^{321} = \epsilon^{132} = -1, \nonumber \eeq if $i,j,k$ are all distinct; 0 otherwise.

\begin{rem}
\begin{enumerate}
  \item When $S$ is the empty set, we define $F_\emptyset \equiv 1$, so we write Expression \ref{e.ehc.1} as $Z(q; \chi(\oL, \uL))$, termed as the Wilson Loop observable of the colored hyperlink $\chi(\oL, \uL)$.
  \item For each bounded surface $S$, we can define an operator, henceforth be denoted by $\hat{F}_S$ and we will write Expression \ref{e.ehc.1} as $\hat{F}_S[Z(q; \chi(\oL, \uL))]$. One should view the operator $\hat{F}_S$ as acting on the Wilson Loop observable $Z(q; \chi(\oL, \uL))$.
\end{enumerate}
\end{rem}

Using the lemma in the appendix found in \cite{EH-Lim02}, one can show that \beq  \lim_{\kappa \rightarrow \infty}\Tr_{\rho_u^\pm}\ \hat{\mathcal{W}}_\kappa^\pm(q; \ol^u, \uL) = \Tr_{\rho^\pm_u}\ \exp[\mp\pi iq\ {\rm sk}(\ol^u, \uL) \cdot \mathcal{E}^\pm]. \label{e.w.5}  \eeq The reader can also refer to \cite{EH-Lim03} for a detailed proof.

Hence
\begin{align}
Z(q; \chi(\oL, \uL) )
:=& \lim_{\kappa \rightarrow \infty}\prod_{u=1}^{\on}\left[\Tr_{\rho_u^+}\ \hat{\mathcal{W}}_\kappa^+(q; \ol^u, \uL) + \Tr_{\rho_u^-}\ \hat{\mathcal{W}}_\kappa^-(q; \ol^u, \uL) \right]\nonumber\\
=& \prod_{u=1}^{\on}\left( \Tr_{\rho^+_u}\ \exp[-\pi iq\ {\rm sk}(\ol^u, \uL) \cdot \mathcal{E}^+] +
\Tr_{\rho^-_u}\ \exp[\pi iq\ {\rm sk}(\ol^u, \uL) \cdot \mathcal{E}^-] \right).\label{e.sux.3}
\end{align}
Note that $\Tr_{\rho^\pm_u}$ means take the trace. See \cite{EH-Lim02} or \cite{EH-Lim03}.
In this article, we will compute the limits of the remaining 5 terms in Expression \ref{e.sux.4}, as $\kappa$ goes to infinity. This will give us the main result Theorem \ref{t.c.1} in this article.



From our main Theorem \ref{t.c.1}, we see that the path integral is computed using the linking number between an oriented geometric hyperlink $\uL$ and an oriented surface $S$. We refer to the definition of the linking number between a hyperlink with a surface given in either \cite{EH-Lim03} or \cite{EH-Lim06}. It is a topological invariant, hence it is invariant under any diffeomorphism of $\bR \times \bR^3$, not just Lorentz transformation. See \cite{greensite2011introduction}. Note that in any diffeomorphism of $\bR^4$, we insist that the matter and geometric hyperlinks must remain time-like under this transformation.

\begin{rem}
In this article, we fixed our matter and geometric hyperlinks and surface $S$, thus we do not consider ambient isotopic classes of loops and surface $S$.
\end{rem}

As a consequence, curvature is quantized via the Expression \ref{e.ehc.1}. As an application, we will explain how quantum gravity solves certain inconsistencies in General Relativity as outlined in \cite{Thiemann:2002nj}.

Compare this with our separate article \cite{EH-Lim03}, in which we showed that the area operator is computed using the piercing number between a hyperlink and a surface. The physical interpretation will be that the area operator sums up the impact of the particles colliding with a surface $S$, taking into account the particles' momentum from boost and rotation.

\subsection{Limit of $\bar{A}_\kappa^\pm$}\label{ss.f.1a}

\begin{notation}(Parametrization of curves)\label{n.n.3}\\
Let $\vec{y}^u \equiv (y_0^u, y_1^u, y_2^u, y_3^u) : I := [0,1] \rightarrow \bR \times \bR^3$ be a parametrization of a loop $\ol^u \subset \oL$, $u=1, \ldots, \on$. We will write $y^u(s) = (y_1^u(s), y_2^u(s), y_3^u(s))$ and $\vec{y}^u(s) \equiv \vec{y}_s^u$. We will also write $\vec{y}^u = (y^u_0, y^u)$. Similarly, choose a parametrization $\vec{\varrho}^{v}: I \rightarrow \bR \times \bR^3$, $v = 1, \ldots, \un$, for each loop $\ul^v \subset \uL$. When the loop is oriented, we will often choose a parametrization which is consistent with the assigned orientation.
\end{notation}

\begin{notation}\label{n.s.3}
Choose an orientable, closed and bounded surface $S \subset \bR^4$, with or without boundary. If it has a boundary $\partial S$, then $\partial S$ is assumed to be a time-like hyperlink. Do note that we allow $S$ to be disconnected, with finite number of components. Parametrize it using \beq \vec{\sigma}: \hat{t} \equiv (t, \bar{t}) \in I^2 \mapsto \left(\sigma_0(t,\bar{t}), \sigma_1(t,\bar{t}), \sigma_2(t,\bar{t}), \sigma_3(t,\bar{t}) \right)\in  \bR^4 \nonumber \eeq and let \beq J_{\alpha\beta} = \frac{\partial \sigma_\alpha}{\partial t} \frac{\partial \sigma_\beta}{\partial \bar{t}} - \frac{\partial \sigma_\alpha}{\partial \bar{t}} \frac{\partial \sigma_\beta}{\partial t}. \nonumber \eeq

When we project $S$ inside $\bR^3$ as $\pi_0(S)$, we can parametrize it using $\sigma \equiv (\sigma_1, \sigma_2, \sigma_3)$. Let
\begin{align*}
K_\sigma(\hat{t}) \equiv& K_\sigma(t,\bar{t}) := (J_{01}(t,\bar{t}), J_{02}(t,\bar{t}), J_{03}(t,\bar{t})),\\
J_\sigma(\hat{t}) \equiv& J_\sigma(t,\bar{t}) := \frac{\partial}{\partial t}\sigma(t, \bar{t}) \times  \frac{\partial}{\partial\bar{t}}\sigma(t, \bar{t}) \equiv (J_{23}(\hat{t}), J_{31}(\hat{t}), J_{12}(\hat{t})).
\end{align*}
If $S$ is assigned an orientation, we will choose a parametrization such that the inherited orientation on $\pi_0(S) \subset \bR^3$ is consistent with the vector $J_\sigma$.
\end{notation}

Let $\vec{y} \equiv (y_0, y) \in \bR^4$, whereby $y \equiv (y_1, y_2, y_3) \in \bR^3$. We will write
\beq \hat{y}_i =
 \left\{
  \begin{array}{ll}
    (y_2, y_3), & \hbox{$i=1$;} \\
    (y_1, y_3), & \hbox{$i=2$;} \\
    (y_1, y_2), & \hbox{$i=3$.}
  \end{array}
\right. \nonumber \eeq

If $x \in \bR^n$, we will write $(p_\kappa^x)^2$ to denote the $n$-dimensional Gaussian function, center at $x$, variance $1/\kappa^2$. For example, \beq p_\kappa^x(\cdot) = \frac{\kappa^2}{2\pi}e^{-\kappa^2|\cdot - x|^2/4},\ x \in \bR^4. \nonumber \eeq We will also write $(q_\kappa^x)^2$ to denote the 1-dimensional Gaussian function, i.e. \beq q_\kappa^x(\cdot) = \frac{\sqrt{\kappa}}{(2\pi)^{1/4}}e^{-\kappa^2 (\cdot - x)^2/4}. \nonumber \eeq

For $x, y \in \bR^2$, we write \beq \left\langle p_\kappa^x, p_\kappa^y \right\rangle = \int_{z\in \bR^2} \frac{\kappa}{\sqrt{2\pi}}e^{-\kappa^2|z - x|^2/4} \frac{\kappa}{\sqrt{2\pi}}e^{-\kappa^2|z - y|^2/4} dz, \nonumber \eeq i.e. we integrate over Lebesgue measure on $\bR^2$. More generally, given $f, g \in C(\bR^n)$, we will write \beq \langle f, g \rangle \equiv \int_{\bR^n} f\cdot g\ d\lambda, \nonumber \eeq whereby $\lambda$ is Lebesgue measure.

For $x = (x_0,x_1, x_2, x_3)$, write \beq x(s_a) :=
\left\{
  \begin{array}{ll}
    (s_0,x_1, x_2, x_3), & \hbox{$a=0$;} \\
    (x_0,s_1, x_2, x_3), & \hbox{$a=1$;} \\
    (x_0,x_1, s_2, x_3), & \hbox{$a=2$;} \\
    (x_0,x_1, x_2, s_3), & \hbox{$a=3$.}
  \end{array}
\right. \nonumber \eeq

Let $\partial_a \equiv \partial/\partial x_a$ be a differential operator. There is an operator $\partial_a^{-1}$ acting on a dense subset in $\overline{\mathcal{S}}_\kappa(\bR^4)$, \beq (\partial_a^{-1}f)(x) := \frac{1}{2}\int_{-\infty}^{x_a} f(x(s_a))\ ds_a - \frac{1}{2}\int_{x_a}^{\infty} f(x(s_a))\ ds_a,\ f \in \overline{\mathcal{S}}_\kappa(\bR^4). \label{e.d.1} \eeq Here, $x_a \in \bR$. Notice that $\partial_a\partial_a^{-1}f \equiv f$ and $\partial_a^{-1}f$ is well-defined provided $f$ is in $L^1$.

\begin{notation}\label{n.k.1}
For each $i = 1, 2, 3$, write
\beq \left\langle p_\kappa^{\vec{x}}, p_\kappa^{\vec{y}} \right\rangle_i =
\left\langle p_\kappa^{\hat{x}_{i}}, p_\kappa^{\hat{y}_{i}} \right\rangle \left\langle q_\kappa^{x_{i}}, \kappa\partial_0^{-1}q_\kappa^{y_{i}} \right\rangle\left\langle \partial_0^{-1}q_\kappa^{x_{0}}, q_\kappa^{y_{0}}\right\rangle. \nonumber \eeq

Here, \beq \partial_0^{-1}q_\kappa^{x_{0}}(t) \equiv \frac{1}{2}\int_{-\infty}^t q_\kappa^{x_{0}}(\tau)\ d\tau -
\frac{1}{2}\int_{t}^\infty q_\kappa^{x_{0}}(\tau)\ d\tau. \nonumber \eeq

Note that $\left\langle \partial_0^{-1}q_\kappa^{x_{0}}, q_\kappa^{y_{0}}\right\rangle \equiv \left\langle q_\kappa^{y_{0}}, \partial_0^{-1}q_\kappa^{x_{0}} \right\rangle$  means we integrate $\partial_0^{-1}q_\kappa^{x_{0}} \cdot q_\kappa^{y_{0}}$ over $\bR$, using Lebesgue measure. It is well-defined because $q_\kappa^{x_0}$ is in $L^1$.
\end{notation}

In \cite{EH-Lim02}, we defined
\begin{align}
\bar{A}_\kappa^\pm := \mp\frac{i\kappa^3}{32\pi\sqrt{4\pi}}\sum_{v=1}^{\un}\int_{I^3}\ \Bigg\{ &\left\langle  p_\kappa^{\vec{\sigma}(\hat{t})}, \kappa\left[\varrho^{v,\prime}_{3,s}\partial_2^{-1} - \varrho^{v,\prime}_{2,s}\partial_3^{-1}\right]p_\kappa^{\vec{\varrho}^{v}_s}\right\rangle J_{01}(\hat{t})\nonumber \\
+& \left\langle p_\kappa^{\vec{\sigma}(\hat{t})}, \kappa\left[\varrho^{v,\prime}_{1,s}\partial_3^{-1} - \varrho^{v,\prime}_{3,s}\partial_1^{-1}\right]p_\kappa^{\vec{\varrho}^{v}_s}\right\rangle J_{02}(\hat{t})\nonumber\\
+&\left\langle p_\kappa^{\vec{\sigma}(\hat{t})}, \kappa\left[\varrho^{v,\prime}_{2,s}\partial_1^{-1} - \varrho^{v,\prime}_{1,s}\partial_2^{-1}\right]p_\kappa^{\vec{\varrho}^{v}_s}\right\rangle J_{03}(\hat{t})
\Bigg\}\ ds d\hat{t} \otimes \mathcal{F}^{\pm},\label{e.c.4a}
\end{align}
whereby $\mathcal{F}^+ = \sum_{i=1}^3 \hat{E}^{0i}$, $\mathcal{F}^- = \sum_{i=1}^3 \hat{E}^{\tau(i)}$.

To compute its limit as $\kappa$ goes to infinity, we need the following lemma.
We will denote the linking number between an oriented surface $S$ and an oriented loop $l$ by ${\rm lk}(l, S)$.

\begin{lem}
Let $S$ be a surface in $\bR \times \bR^3$ and refer to Notation \ref{n.s.3}.\\ Suppose $(i,j,k) \in \{(1,2,3), (3,1,2), (2,3,1)\}$. We have
\begin{align*}
\frac{\kappa^3}{32\pi}&\int_{I^3}\left\langle  p_\kappa^{\vec{\sigma}(\hat{t})}, \kappa\partial_j^{-1}p_\kappa^{\vec{\varrho}^{v}_s}\right\rangle \left[\varrho^{v,\prime}_{k,s}J_{0i}(\hat{t}) -
\varrho^{v,\prime}_{i,s}J_{0k}(\hat{t})\right]
ds d\hat{t} \longrightarrow \pi \cdot {\rm lk}(l^v, S).
\end{align*}
\end{lem}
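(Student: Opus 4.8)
The plan is to turn the $\kappa\to\infty$ limit of this triple integral into a finite signed count of points, by a Gaussian concentration (Laplace) argument, and then to match that count with ${\rm lk}(l^v,S)$. The starting observation is that the four–dimensional kernel factorizes: since $p_\kappa^{\vec x}=\prod_{a=0}^3 q_\kappa^{x_a}$ and $\partial_j^{-1}$ acts only on the $j$-th coordinate, one gets
\[
\left\langle p_\kappa^{\vec\sigma(\hat t)}, \kappa\partial_j^{-1}p_\kappa^{\vec\varrho^{v}_s}\right\rangle
=\Big(\prod_{a\neq j}\langle q_\kappa^{\sigma_a}, q_\kappa^{\varrho_a}\rangle\Big)\,
\langle q_\kappa^{\sigma_j}, \kappa\partial_j^{-1}q_\kappa^{\varrho_j}\rangle .
\]
A one–line Gaussian computation gives $\langle q_\kappa^{\sigma_a},q_\kappa^{\varrho_a}\rangle=e^{-\kappa^2(\sigma_a-\varrho_a)^2/8}$ for the three ``even'' slots, while the distinguished slot, after the substitution $x=\sigma_j+u/\kappa$, $s=\varrho_j+w/\kappa$, depends on $\kappa(\sigma_j-\varrho_j)$ alone and equals $\psi(\kappa(\sigma_j-\varrho_j))$ with $\psi(r)=\tfrac{1}{2\sqrt{2\pi}}\iint e^{-(u^2+w^2)/4}\,\sgn(r+u-w)\,du\,dw$, an odd increasing function with $\psi(\pm\infty)=\pm\sqrt{2\pi}$. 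Thus the integrand is a product of three sharply peaked Gaussians in the $\sigma_a-\varrho_a$ ($a\neq j$), an essentially sign-valued bounded factor $\psi$, and the smooth weight $\varrho^{v,\prime}_{k,s}J_{0i}-\varrho^{v,\prime}_{i,s}J_{0k}$.

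Next I would run the concentration argument on $I^3$. The three Gaussians enforce $\sigma_a(\hat t)=\varrho_a(s)$ for $a\neq j$; by the standing hypothesis that $\pi_a(\uL)$ meets $\pi_a(S)$ in at most finitely many points, these three equations have a finite solution set $\{p_0\}$, which I take to be transverse, and disjointness of $l^v$ from $S$ forces $\sigma_j(p_0)\neq\varrho_j(p_0)$, so $\psi$ is saturated to $\sqrt{2\pi}\,\sgn(\sigma_j-\varrho_j)$ there. Localizing near each $p_0$, changing variables to $G:=(\sigma_a-\varrho_a)_{a\neq j}$ with Jacobian $\det DG$, and using $\tfrac{\kappa^3}{(8\pi)^{3/2}}e^{-\kappa^2|G|^2/8}\to\delta(G)$, the prefactor $\kappa^3$ exactly absorbs the $\kappa^{-3}$ from the three transverse integrations; the complement of these neighbourhoods, and the layer $|\sigma_j-\varrho_j|=O(1/\kappa)$, are negligible by dominated convergence. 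Since $\tfrac{1}{32\pi}\sqrt{2\pi}(8\pi)^{3/2}=\pi$, this yields
\[
\lim_{\kappa\to\infty}(\text{LHS})=\pi\sum_{p_0}\sgn\big(\sigma_j(p_0)-\varrho_j(p_0)\big)\,
\frac{\varrho^{v,\prime}_{k,s}J_{0i}-\varrho^{v,\prime}_{i,s}J_{0k}}{|\det DG|}\bigg|_{p_0}.
\]

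The hard part will be the last step: identifying this signed sum with ${\rm lk}(l^v,S)$. Each $p_0$ is a point where $l^v$ and $S$ share a time coordinate and agree in the two spatial directions other than $j$ — that is, a transverse intersection of their projections that forget $x_j$ — and $\sgn(\sigma_j-\varrho_j)$ records on which side the loop sits in the remaining direction. This is precisely the ``preimage of the regular value $\pm e_j$'' computation of the degree of the Gauss map $l^v\times S\to S^3$, $(\vec y,\vec\sigma)\mapsto(\vec y-\vec\sigma)/|\vec y-\vec\sigma|$, which is the linking number; its independence of the triple $(i,j,k)$ then just re-expresses the independence of the degree of the chosen regular direction. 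The delicate points are the orientation bookkeeping relating $\sgn\det DG$ to the local degree, and reconciling the Jacobian combination in the weight with the intersection sign: expanding $\det DG=-\varrho^{v,\prime}_{0,s}J_{ik}+\varrho^{v,\prime}_{i,s}J_{0k}-\varrho^{v,\prime}_{k,s}J_{0i}$ shows the numerator differs from $-\det DG$ by a term $\varrho^{v,\prime}_{0,s}J_{ik}$, which I expect to be killed on each piece of $S$ by the standing reduction of the surface to pieces lying in $\bR^3$ or in $\bR\times\Sigma_{i'}$ (on such a piece one of the relevant minors vanishes). Here I would invoke the precise definition of the surface–loop linking number from the cited works to certify that the signed count is exactly ${\rm lk}(l^v,S)$. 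The main obstacle is thus this geometric/orientation identification, rather than the routine analytic concentration estimate that precedes it.
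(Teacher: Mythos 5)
Your analytic skeleton is essentially the paper's own: the same factorization of the kernel into three Gaussian slots plus the distinguished $j$-slot, saturation of that slot to $\pm\sqrt{2\pi}$ times a sign, Gaussian concentration via a change of variables, and your constant check $\tfrac{1}{32\pi}\sqrt{2\pi}(8\pi)^{3/2}=\pi$ agrees with the paper's bookkeeping. The genuine gap is exactly where you flagged it, and your proposed repair does not work. Your limit carries the weight $\bigl(\varrho^{v,\prime}_{k}J_{0i}-\varrho^{v,\prime}_{i}J_{0k}\bigr)/|\det DG|$ at each concentration point, and, as you computed, $-\det DG$ exceeds the numerator by $\varrho^{v,\prime}_{0}J_{ik}$. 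You hope this extra term dies because $S$ has been cut into pieces that ``lie in'' $\bR^3$ or $\bR\times\Sigma_{i'}$; but the paper's decomposition only asks that the \emph{projections} $\pi_a$ of each piece be embedded surfaces meeting $\pi_a(\ul^v)$ in at most one piercing --- it does not place any piece inside a coordinate $3$-space, so $J_{ik}$ has no reason to vanish at a piercing. Worse, if a piece of $S$ really did lie in a constant-time slice, then $J_{0i}=J_{0k}=0$ there and your \emph{numerator} would vanish identically, so that piece would contribute $0$ rather than $\pm\pi$: the proposed mechanism is self-defeating. Without killing $\varrho^{v,\prime}_{0}J_{ik}$, the ratio is generally not $\pm1$ and the signed count you obtain is not the linking number for an arbitrary parametrization.

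The paper closes this hole by deforming the \emph{loop}, not the surface: appealing to the fact that only equivalence classes of curves (ambient isotopy) will matter, it assumes without loss of generality that $\varrho^{v,\prime}_{0}=0$ on a neighbourhood of each piercing, i.e.\ the relevant segment of $\pi_j(\ul^v)$ lies in a plane parallel to $\Sigma_j$. With that normalization the integrand weight equals $(\varrho^{v,\prime}_{k},\varrho^{v,\prime}_{0},\varrho^{v,\prime}_{i})\cdot(J_{0i},J_{ik},J_{k0})$ exactly, i.e.\ $\mp\det DG$, so the change of variables absorbs the weight cleanly and each piercing contributes $\pi\,\sgn\bigl(\sigma_j-\varrho_j\bigr)=\pi\,\varepsilon(p)$, the algebraic piercing sign of the cited works; summing over the pieces of $S$ gives $\pi\,{\rm lk}(\ul^v,S)$. (The paper also does not need your degree-of-the-Gauss-map identification: it takes $\varepsilon(p)$ and its sum as the \emph{definition} of the linking number from the earlier papers.) So to complete your proof you must either import this normalization, together with the justification that both sides are unchanged under such an isotopy, or genuinely prove that your Jacobian-weighted signed count equals the linking number --- the step your proposal leaves open.
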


\begin{proof}
Fix a $j=1, 2, 3$, which determines $i$ and $k$. Note that we can split the surface $S$ into a finite disjoint union of surfaces $\bigcup_{u}S_u$, such that at most one curve in $\pi_j(\ul^v)$ pierce each such surface $S_u$, and such an intersection point is referred to as a piercing. And because we will in our sequel to this article, consider equivalence class of curves, we may and will assume that at the piercing $p \in \bR \times \Sigma_j$, the intersection of $\pi_j(\ul^v)$ with $S_u$, $\varrho^{v,\prime}_0 = 0$ in a small neighborhood around $p$. That is, a small segment of $\pi_j(\ul^v)$ that contains $p$ lies in a plane that is parallel to $\Sigma_j$.

Thus, without loss of generality, we assume that $\omega: I^3 \rightarrow \bR^3$, \beq \omega(t, \bar{t}, s) :=
(\sigma_k(\hat{t}) -\varrho_k^v(s),\ \sigma_0(\hat{t}) -\varrho_0^v(s),\ \sigma_i(\hat{t}) -\varrho_i^v(s)),\ \hat{t} \equiv (t, \bar{t}), \nonumber \eeq is diffeomorphic to an open ball in $\bR^3$. Notice that here, we use the time-axis and the $\Sigma_j$ plane to form $\bR^3$. And if $p = (\varrho_k^v(s_0), \varrho_0^v(s_0), \varrho_i^v(s_0))$, then there exists an $\epsilon > 0$ such that for all $s \in (s_0-\epsilon, s_0 + \epsilon)$, $\varrho_0^{v,\prime}(s) = 0$. So without loss of generality, we assume that on $I$, $\varrho_0^{v,\prime}(s) = 0$. However, on $I^2$, $\partial \sigma_0/\partial t, \partial \sigma_0/\partial \bar{t} \neq 0$.

Refer to Notations \ref{n.n.3} and \ref{n.k.1}. From the lemma in the appendix in \cite{EH-Lim02} or Lemma 5 in \cite{CS-Lim03}, it is straightforward to show that
\begin{align*}
\frac{1}{\sqrt{2\pi}}\left\langle  p_\kappa^{\vec{\sigma}(\hat{t})}, \kappa\partial_j^{-1}p_\kappa^{\vec{\varrho}^{v}_s}\right\rangle &= \frac{1}{\sqrt{2\pi}}\left\langle q_\kappa^{\sigma_{j}(\hat{t})}, \kappa\partial_0^{-1}q_\kappa^{\varrho_{j,s}^{v}} \right\rangle \cdot \left\langle p_\kappa^{\hat{\sigma}_{j}(\hat{t})}, p_\kappa^{\hat{\varrho}_{j,s}^v} \right\rangle \cdot \left\langle q_\kappa^{\sigma_{0}(\hat{t})}, q_\kappa^{\varrho_{0,s}^v} \right\rangle \\
&= \left\langle q_\kappa^{\sigma_{j}(\hat{t})}, \frac{\kappa}{\sqrt{2\pi}}\partial_0^{-1}q_\kappa^{\varrho_{j,s}^{v}} \right\rangle e^{-\kappa^2|(\sigma_0(\hat{t}) - \varrho_0^v(s),  \sigma_i(\hat{t}) -\varrho_i^v(s), \sigma_k(\hat{t})-\varrho_k^v(s))|^2/8}.
\end{align*}

Let $R = \omega(I^3)$ and let $z: \omega \mapsto \sigma_j(\hat{t})$, $x : \omega \mapsto \varrho_j^v(s)$. Suppose that \beq (\varrho^{v,\prime}_k, \varrho^{v,\prime}_0, \varrho^{v,\prime}_i ) \cdot (J_{0i}, J_{ik}, J_{k0}) \equiv  (\varrho^{v,\prime}_kJ_{0i} + \varrho^{v,\prime}_0J_{ik} + \varrho^{v,\prime}_iJ_{k0})> 0. \nonumber \eeq If is negative, then our final result below will pick up an extra negative sign.

Using a change in variable argument, we see that ($J_{0i} = -J_{i0}$.)
\begin{align}
\frac{\kappa^3}{32\pi}&\int_{I^3}\left\langle  p_\kappa^{\vec{\sigma}(\hat{t})}, \kappa\partial_j^{-1}p_\kappa^{\vec{\varrho}^{v}_s}\right\rangle \left[\varrho^{v,\prime}_{k,s}J_{0i}(\hat{t})
- \varrho^{v,\prime}_{i,s}J_{0k}(\hat{t})\right]
ds d\hat{t} \nonumber\\
=& \pi\frac{\kappa^3}{16\pi\sqrt{2\pi}}\int_{R}\left\langle q_\kappa^{z(\omega)}, \frac{\kappa}{\sqrt{2\pi}}\partial_0^{-1}q_\kappa^{x(\omega)} \right\rangle \cdot e^{-\kappa^2|\omega|^2/8}
d\omega. \label{e.c.1}
\end{align}

From Item 1 in the lemma in the appendix in \cite{EH-Lim02}, or from Item 2 in Lemma 5 in \cite{CS-Lim03}, we see that
\beq \left\langle q_\kappa^{z(\omega)}, \frac{\kappa}{\sqrt{2\pi}}\partial_0^{-1}q_\kappa^{x(\omega)} \right\rangle = \sgn(z(\omega) - x(\omega)). \nonumber \eeq And \beq \frac{\kappa^3}{16\pi\sqrt{2\pi}}\int_R e^{-\kappa^2|\omega|^2/8} d\omega \rightarrow 1, \nonumber \eeq if $0 \in R$, 0 otherwise. When $0 \in R$, it means that the curve $(\varrho_k, \varrho_0, \varrho_i)$ pierce the surface $\pi_j(S)$.

Putting all together, we see that the RHS of Equation (\ref{e.c.1}) converges to $\pi \cdot \sgn(z(\omega) - x(\omega))$, which is equal to $\pi \varepsilon(p)$, $\varepsilon(p)$ is the algebraic piercing number of $p$ defined in \cite{EH-Lim03} or \cite{EH-Lim06}.
\end{proof}

Suppose $L = \{l^1, \ldots, l^n\}$ is an oriented hyperlink in $\bR^4$. Then for an oriented surface $S$, we define the linking number between $L$ and $S$ as \beq {\rm lk}(L, S) := \sum_{u=1}^n {\rm lk}(l^u, S). \nonumber \eeq

\begin{cor}\label{c.c.3}
The sum $\bar{A}_\kappa^+ + \bar{A}_\kappa^-$ from Expression \ref{e.c.4a} converges to \beq -\frac{3i\sqrt\pi}{2} {\rm lk}(\uL, S) \otimes \left[ \sum_{i=1}^3 \hat{E}^{0i} - \sum_{j=1}^3\hat{E}^{\tau(j)} \right]. \nonumber \eeq
\end{cor}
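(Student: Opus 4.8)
The plan is to show that the three integral terms defining $\bar{A}_\kappa^\pm$ in Expression (\ref{e.c.4a}) are nothing but three copies of the integral appearing in the Lemma immediately above, regrouped according to the operator $\partial_j^{-1}$, so that the Corollary follows by applying that Lemma three times and then collecting constants. First I would factor the scalar prefactor $\mp\frac{i}{\sqrt{4\pi}}$ and the Lie-algebra factor $\mathcal{F}^\pm$ out of $\bar{A}_\kappa^\pm$, leaving $\sum_{v=1}^{\un}$ of a bracket whose leading constant is exactly $\frac{\kappa^3}{32\pi}$, matching the normalization in the Lemma.

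The key combinatorial step is the regrouping. Using linearity of $\langle\cdot,\cdot\rangle$ I would expand each of the three braces into two summands and collect the coefficient of each $\kappa\partial_j^{-1}p_\kappa^{\vec{\varrho}^v_s}$. The coefficient of $\kappa\partial_1^{-1}$ comes from the second and third braces and equals $\varrho^{v,\prime}_{2,s}J_{03}-\varrho^{v,\prime}_{3,s}J_{02}$; the coefficient of $\kappa\partial_2^{-1}$ comes from the first and third and equals $\varrho^{v,\prime}_{3,s}J_{01}-\varrho^{v,\prime}_{1,s}J_{03}$; and the coefficient of $\kappa\partial_3^{-1}$ comes from the first and second and equals $\varrho^{v,\prime}_{1,s}J_{02}-\varrho^{v,\prime}_{2,s}J_{01}$. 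These are precisely the brackets $\varrho^{v,\prime}_{k,s}J_{0i}-\varrho^{v,\prime}_{i,s}J_{0k}$ of the Lemma for the three cyclic triples $(i,j,k)=(3,1,2),(1,2,3),(2,3,1)$ respectively, so the inner bracket of $\bar{A}_\kappa^\pm$ splits exactly as $\sum_{j=1}^3$ of the Lemma's integrand.

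Applying the Lemma to each of the three pieces, each converges to $\pi\cdot{\rm lk}(\ul^v,S)$, so their sum converges to $3\pi\cdot{\rm lk}(\ul^v,S)$; summing over $v$ and using ${\rm lk}(\uL,S)=\sum_{v=1}^{\un}{\rm lk}(\ul^v,S)$ gives $3\pi\cdot{\rm lk}(\uL,S)$. Reinstating the prefactor and simplifying $\frac{3\pi}{\sqrt{4\pi}}=\frac{3\sqrt\pi}{2}$ yields $\bar{A}_\kappa^\pm\to\mp\frac{3i\sqrt\pi}{2}\,{\rm lk}(\uL,S)\otimes\mathcal{F}^\pm$. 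Finally I would add the two, substitute $\mathcal{F}^+=\sum_{i=1}^3\hat{E}^{0i}$ and $\mathcal{F}^-=\sum_{j=1}^3\hat{E}^{\tau(j)}$, and read off the claimed limit. The only point requiring care — and the one I would check most carefully — is the bookkeeping in the regrouping: that the signs and index permutations line up to produce exactly the three admissible cyclic triples of the Lemma (keeping in mind $J_{0i}=-J_{i0}$), since all the analytic content is already contained in the Lemma and there is no further convergence estimate to perform.
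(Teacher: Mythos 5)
Your proposal is correct and is exactly the derivation the paper intends (the paper leaves the Corollary's proof implicit, as an immediate consequence of the preceding Lemma): regrouping the three braces of Expression (\ref{e.c.4a}) by $\partial_j^{-1}$ reproduces the Lemma's integrand for the three cyclic triples $(3,1,2)$, $(1,2,3)$, $(2,3,1)$, each contributing $\pi\,{\rm lk}(\ul^v,S)$, and your bookkeeping of signs, the prefactor $\mp\tfrac{i}{\sqrt{4\pi}}$, the factor of $3$, and the combination $\mathcal{F}^+-\mathcal{F}^-$ all check out.
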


\subsection{Limit of $\bar{B}_\kappa$}\label{ss.f.1b}

Refer to Notations \ref{n.n.3}, \ref{n.s.3} and \ref{n.k.1}.

In \cite{EH-Lim02}, we defined
\begin{align}
\bar{B}_\kappa := \frac{i\kappa^3}{32\pi\sqrt{4\pi}}&\int_{I^3}\  \kappa\left\langle \partial_0^{-1} p_\kappa^{\vec{\sigma}(\hat{t})}, p_\kappa^{\vec{\varrho}^{v}_s}\right\rangle \varrho_s^{v,\prime}\cdot J_\sigma(\hat{t}) \otimes \sum_{i=1}^3\hat{E}^{0i}\ ds d\hat{t}, \nonumber \\
-\frac{i\kappa^3}{32\pi\sqrt{4\pi}}&\int_{I^3}\ \kappa\left\langle \partial_0^{-1} p_\kappa^{\vec{\sigma}(\hat{t})}, p_\kappa^{\vec{\varrho}^{v}_s}\right\rangle \varrho_s^{v,\prime}\cdot J_\sigma(\hat{t}) \otimes \sum_{j=1}^3\hat{E}^{\tau(j)}\ ds d\hat{t}, \label{e.c.4c}
\end{align}
In this subsection, we will compute its limit as $\kappa$ goes to infinity.

\begin{lem}
Let $S$ be a surface in $\bR \times \bR^3$. Refer to Notation \ref{n.s.3}. We have
\begin{align*}
\frac{\kappa^3}{32\pi}\int_{I^3}\  &\kappa\left\langle \partial_0^{-1}  p_\kappa^{\vec{\sigma}(\hat{t})}, p_\kappa^{\vec{\varrho}^{v}_s}\right\rangle \varrho_s^{v,\prime}\cdot J_\sigma(\hat{t})\ ds d\hat{t} \longrightarrow -\pi \cdot {\rm lk}(\ul^v, S).
\end{align*}
\end{lem}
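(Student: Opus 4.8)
The plan is to follow the template of the preceding lemma on $\bar A_\kappa^\pm$ almost verbatim, the only structural changes being that the relevant projection is now $\pi_0$ (forgetting the time coordinate) rather than $\pi_j$, and that the inverse derivative $\partial_0^{-1}$ falls on the surface's time slot $\sigma_0$ rather than on $\varrho_j$. First I would factor the four-dimensional pairing. Since $p_\kappa^{\vec\sigma}=q_\kappa^{\sigma_0}\prod_{i=1}^3 q_\kappa^{\sigma_i}$ splits as a product over the four coordinates and $\partial_0^{-1}$ acts only in the $x_0$ variable, the inner product over $\bR^4$ separates as
\[
\left\langle\partial_0^{-1}p_\kappa^{\vec\sigma(\hat t)},p_\kappa^{\vec\varrho^{v}_s}\right\rangle=\left\langle\partial_0^{-1}q_\kappa^{\sigma_0(\hat t)},q_\kappa^{\varrho_{0,s}^{v}}\right\rangle\; e^{-\kappa^2|(\sigma_1-\varrho_1,\,\sigma_2-\varrho_2,\,\sigma_3-\varrho_3)|^2/8},
\]
where each one-dimensional spatial pairing contributes a factor $e^{-\kappa^2(\sigma_i-\varrho_i)^2/8}$ as computed in the appendix lemma of \cite{EH-Lim02} (or Lemma 5 in \cite{CS-Lim03}).

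Next I would localize exactly as before: split $S$ into a finite disjoint union of pieces so that at most one strand of $\pi_0(\ul^v)$ pierces each piece, and reduce to analysing a single transverse piercing of $\pi_0(\ul^v)$ through $\pi_0(S)\subset\bR^3$. On such a piece I introduce the change of variables $\omega:I^3\to\bR^3$, $\omega(t,\bar t,s)=(\sigma_1(\hat t)-\varrho_1^v(s),\,\sigma_2(\hat t)-\varrho_2^v(s),\,\sigma_3(\hat t)-\varrho_3^v(s))$, which is a local diffeomorphism precisely because $\pi_0(\ul^v)$ meets $\pi_0(S)$ transversally. The decisive identity, using $J_\sigma=\tfrac{\partial}{\partial t}\sigma\times\tfrac{\partial}{\partial\bar t}\sigma$ and the fact that the third column of the Jacobian of $\omega$ is $-\varrho_s^{v,\prime}$, is
\[
\varrho_s^{v,\prime}\cdot J_\sigma(\hat t)=\det\!\left[\tfrac{\partial}{\partial t}\sigma,\;\tfrac{\partial}{\partial\bar t}\sigma,\;\varrho_s^{v,\prime}\right]=-\det\frac{\partial\omega}{\partial(t,\bar t,s)}.
\]
It is this sign that will ultimately produce the minus sign in the statement, in contrast to the $\bar A_\kappa^\pm$ lemma. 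Note that on a connected piece where $\omega$ is a diffeomorphism, $\sgn\det\tfrac{\partial\omega}{\partial(t,\bar t,s)}$ is constant, so $\varrho_s^{v,\prime}\cdot J_\sigma\,ds\,d\hat t=-\sgn(\det)\,d\omega$.

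Then I would pass to the limit. Writing $\langle\partial_0^{-1}q_\kappa^{\sigma_0},q_\kappa^{\varrho_0}\rangle=\tfrac{\sqrt{2\pi}}{\kappa}\langle q_\kappa^{\varrho_0},\tfrac{\kappa}{\sqrt{2\pi}}\partial_0^{-1}q_\kappa^{\sigma_0}\rangle$ and invoking the $\sgn$ identity from the same appendix lemma (with the two arguments swapped by symmetry of the pairing) gives the time factor $\tfrac{\sqrt{2\pi}}{\kappa}\sgn(\varrho_0-\sigma_0)$. Combining the prefactor $\tfrac{\kappa^3}{32\pi}$, the extra $\kappa$, this $\tfrac{\sqrt{2\pi}}{\kappa}$, and performing the change of variables, the contribution of one piece becomes
\[
-\frac{\kappa^3\sqrt{2\pi}}{32\pi}\,\sgn\!\Big(\det\tfrac{\partial\omega}{\partial(t,\bar t,s)}\Big)\int_{R}\sgn(\varrho_0-\sigma_0)\,e^{-\kappa^2|\omega|^2/8}\,d\omega,\qquad R=\omega(I^3).
\]
Since $\tfrac{\kappa^3\sqrt{2\pi}}{32\pi}\int_{\bR^3}e^{-\kappa^2|\omega|^2/8}\,d\omega=\pi$, the Gaussian concentrates at $\omega=0$, i.e. at the spatial piercing, so the integral tends to $\pi$ when $0\in R$ and to $0$ otherwise; evaluating the continuous factor $\sgn(\varrho_0-\sigma_0)$ at the piercing then yields $-\pi\,\sgn(\det)\,\sgn(\varrho_0-\sigma_0)$ per piercing. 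Summing over all piercings gives $-\pi\,{\rm lk}(\ul^v,S)$.

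I expect the main obstacle to be the sign bookkeeping rather than any analytic difficulty: one must verify that $\sgn\!\big(\det\tfrac{\partial\omega}{\partial(t,\bar t,s)}\big)\cdot\sgn(\varrho_0-\sigma_0)$ is exactly the algebraic contribution of the piercing to the four-dimensional linking number ${\rm lk}(\ul^v,S)$ in $\bR\times\bR^3$ — encoding simultaneously the transverse-intersection sign in $\bR^3$ and whether the loop passes in front of or behind $S$ in time — consistently with the orientation conventions fixed in Notation \ref{n.s.3} (namely that the chosen parametrization aligns $J_\sigma$ with the induced orientation of $\pi_0(S)$). The remaining points are routine: justifying that $S$ can be split so each piece is pierced at most once and that $\omega$ is a genuine diffeomorphism there, and controlling the Gaussian tails on the pieces containing no piercing so that they contribute nothing in the limit.
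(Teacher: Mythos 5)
Your proposal is correct and matches the paper's own proof essentially step for step: the same factorization of the four-dimensional Gaussian pairing into a time factor times $e^{-\kappa^2|\sigma(\hat t)-\varrho^v(s)|^2/8}$, the same localization of $\pi_0(S)$ into pieces pierced at most once, the same change of variables $\omega(t,\bar t,s)=\sigma(\hat t)-\varrho^v(s)$, the same $\sgn$ identity for the $\partial_0^{-1}$ pairing, and the same Gaussian concentration argument giving $\pm\pi$ per piercing. The only cosmetic difference is that you carry the sign as $\sgn\det\bigl(\partial\omega/\partial(t,\bar t,s)\bigr)$ where the paper uses the equivalent $\sgn(\varrho^{v,\prime}\cdot J_\sigma)$, and both arguments conclude by identifying the product of that sign with $\sgn(\varrho_0-\sigma_0)$ as the orientation-times-height contribution of the piercing to ${\rm lk}(\ul^v,S)$, citing the definition in the earlier papers.
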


\begin{proof}
Note that we can split the surface $\pi_0(S)$ into a finite disjoint union of surfaces, such that $\ul^v$ pierce each such surface at most once.

Thus, without loss of generality, we assume that $\omega: I^3 \rightarrow \bR^3$, \beq \omega(t, \bar{t}, s) := \sigma(\hat{t})-\varrho^v(s) ,\ \hat{t} \equiv (t, \bar{t}), \nonumber \eeq is diffeomorphic to an open ball in spatial $\bR^3$.

It is straightforward to show using Item 3 in the lemma in the appendix in \cite{EH-Lim02}, or using Item 2 in Lemma 5 in \cite{CS-Lim03}, that
\beq \kappa\left\langle \partial_0^{-1} p_\kappa^{\vec{\sigma}(\hat{t})},  p_\kappa^{\vec{\varrho}^{v}_s}\right\rangle
= \left\langle \kappa\partial_0^{-1}q_\kappa^{\sigma_{0}(\hat{t})},  q_\kappa^{\varrho_0^v(s)} \right\rangle
e^{-\kappa^2|\sigma(\hat{t}) - \varrho^v(s)|^2/8}.
\nonumber \eeq

Let $R = \omega(I^3)$ and let $z: \omega \mapsto \sigma_0(\hat{t})$, $x : \omega \mapsto \varrho_0^v(s)$. Using a change in variable argument, we see that
\begin{align}
\frac{\kappa^3}{32\pi}&\int_{I^3}\left\langle \kappa\partial_0^{-1} p_\kappa^{\vec{\sigma}(\hat{t})}, p_\kappa^{\vec{\varrho}^{v}_s}\right\rangle \varrho_s^{v,\prime}\cdot J_\sigma(\hat{t})
ds d\hat{t} \nonumber\\
=& {\rm sgn}(\varrho^{v,\prime}\cdot J_\sigma)\frac{\pi\kappa^3}{16\pi\sqrt{2\pi}}\int_{R}\left\langle \frac{\kappa}{\sqrt{2\pi}}\partial_0^{-1}q_\kappa^{z(\omega)}, q_\kappa^{x(\omega)} \right\rangle \cdot e^{-\kappa^2|\omega|^2/8}
d\omega. \label{e.c.2}
\end{align}

From Item 1 in the lemma in the appendix in \cite{EH-Lim02}, or using Item 2 in Lemma 5 in \cite{CS-Lim03}, we see that
\beq \left\langle \frac{\kappa}{\sqrt{2\pi}}\partial_0^{-1}q_\kappa^{z(\omega)}, q_\kappa^{x(\omega)} \right\rangle = \sgn(x(\omega) - z(\omega)) = -\sgn(z(\omega) - x(\omega)). \nonumber \eeq And \beq \frac{\kappa^3}{16\pi\sqrt{2\pi}}\int_R e^{-\kappa^2|\omega|^2/8} d\omega \rightarrow 1, \nonumber \eeq if $0 \in R$, 0 otherwise. When $0 \in R$, it means that the curve in $\bR^3$, parametrized by $\varrho^v$, pierce the surface $\pi_0(S)$.

Putting all together, we see that the RHS of Equation (\ref{e.c.2}) converges to $\mp \pi$, depending on the signs of ${\rm sgn}(\varrho^{v,\prime}\cdot J_\sigma)$ and $\sgn(z(0) - x(0))$, which are equal to the orientation and height of $p\in \bR^3$ respectively as defined in \cite{EH-Lim03} or \cite{EH-Lim06}, where $\pi_0(\ul^v)$ intersects $\pi_0(S)$ at $p$.
\end{proof}

\begin{cor}\label{c.c.2}
We have Expression \ref{e.c.4c} converges to \beq -\frac{i\sqrt{\pi}}{2} {\rm lk}(\uL, S)  \otimes \left[ \sum_{i=1}^3 \hat{E}^{0i} - \sum_{j=1}^3\hat{E}^{\tau(j)} \right]. \nonumber \eeq
\end{cor}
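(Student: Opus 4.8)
The plan is to reduce the corollary directly to the preceding lemma, since the analytic heart of the computation — the identification of the oscillatory Gaussian integral with a signed piercing count — is already carried out there. First I would observe that $\bar{B}_\kappa$ in Expression \ref{e.c.4c} is a sum of two terms, each of which factors as a scalar integral tensored against a fixed element of $\mathfrak{su}(2)\times\mathfrak{su}(2)$, namely $\sum_{i=1}^3 \hat{E}^{0i}$ and $\sum_{j=1}^3 \hat{E}^{\tau(j)}$ respectively. Since these Lie-algebra factors carry no $\kappa$-dependence, I would pull them outside the limit and concentrate on the scalar integral
\[
\frac{\kappa^3}{32\pi}\int_{I^3}\kappa\left\langle \partial_0^{-1} p_\kappa^{\vec{\sigma}(\hat{t})}, p_\kappa^{\vec{\varrho}^{v}_s}\right\rangle \varrho_s^{v,\prime}\cdot J_\sigma(\hat{t})\ ds\, d\hat{t},
\]
which is precisely the quantity appearing in the lemma and therefore converges to $-\pi\cdot {\rm lk}(\ul^v, S)$.

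Next I would restore the overall constant $\tfrac{i}{\sqrt{4\pi}}$ (common to both terms, up to sign) that was stripped off when matching to the lemma's normalization $\tfrac{\kappa^3}{32\pi}$, and sum over the loops $v=1,\dots,\un$ of $\uL$ (the summation being implicit, exactly as in the analogous Expression \ref{e.c.4a}). Because the sum is finite the limit passes through it, and $\sum_{v} {\rm lk}(\ul^v, S) = {\rm lk}(\uL, S)$ by definition. Using $\sqrt{4\pi}=2\sqrt{\pi}$, the first term then converges to $\tfrac{i}{2\sqrt\pi}\cdot(-\pi)\cdot{\rm lk}(\uL,S)\otimes\sum_{i}\hat{E}^{0i} = -\tfrac{i\sqrt\pi}{2}\,{\rm lk}(\uL,S)\otimes\sum_{i}\hat{E}^{0i}$, while the second term, which carries the opposite sign in \ref{e.c.4c}, converges to $+\tfrac{i\sqrt\pi}{2}\,{\rm lk}(\uL,S)\otimes\sum_{j}\hat{E}^{\tau(j)}$. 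Collecting the two into a single bracketed difference yields exactly the claimed limit.

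I do not expect a genuine obstacle here: all the delicate estimates (the splitting of $S$ into pieces each pierced at most once, the change of variables to the map $\omega$, and the evaluation of the $\sgn$ and Gaussian-mass factors) live inside the lemma and may be invoked as stated. The only points requiring care are bookkeeping ones: correctly matching the prefactor $\tfrac{i\kappa^3}{32\pi\sqrt{4\pi}}$ against the lemma's $\tfrac{\kappa^3}{32\pi}$, tracking the relative minus sign between the $\hat{E}^{0i}$ and $\hat{E}^{\tau(j)}$ contributions, and confirming that the sum over $v$ reproduces the full linking number ${\rm lk}(\uL,S)$ rather than a single ${\rm lk}(\ul^v,S)$.
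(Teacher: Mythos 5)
Your proposal is correct and follows exactly the paper's (implicit) argument: the corollary is a direct consequence of the preceding lemma, obtained by applying it termwise to the two pieces of $\bar{B}_\kappa$, matching the prefactor $\frac{i}{\sqrt{4\pi}} = \frac{i}{2\sqrt{\pi}}$ against the lemma's normalization, summing over the components $\ul^v$ of $\uL$, and tracking the relative sign between the $\hat{E}^{0i}$ and $\hat{E}^{\tau(j)}$ terms. Your bookkeeping, including the observation that the sum over $v$ is implicit in Expression \ref{e.c.4c} just as it is explicit in Expression \ref{e.c.4a}, reproduces the stated limit precisely.
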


\subsection{Limit of $\bar{C}_\kappa^\pm$}\label{ss.fs3}

Refer to Notations \ref{n.n.3}, \ref{n.s.3} and \ref{n.k.1}. Let $C_3 = \{(1,2,3), (2,3,1), (3,1,2)\}$.

In \cite{EH-Lim02}, we defined
\begin{align}
\bar{C}_\kappa^\pm := \frac{-\kappa^4}{32\pi^2}\sum_{(i,j,k)\in C_3}\sum_{v,\bar{v}=1}^{\un}\int_{I^4}\ \Bigg\{ &\left\langle  \partial_0^{-1}p_\kappa^{\vec{\sigma}(\hat{t})}, \kappa\left[\varrho^{v,\prime}_{k,s}\partial_j^{-1} - \varrho^{v,\prime}_{j,s}\partial_k^{-1}\right]p_\kappa^{\vec{\varrho}^{v}_s}\right\rangle \nonumber\\
\times& \left\langle  \partial_0^{-1}p_\kappa^{\vec{\sigma}(\hat{t})}, \kappa\left[\varrho^{\bar{v},\prime}_{i,\bar{s}}\partial_k^{-1} - \varrho^{\bar{v},\prime}_{k,\bar{s}}\partial_i^{-1}\right]p_\kappa^{\vec{\varrho}^{\bar{v}}_{\bar{s}}}
\right\rangle
\Bigg\}\ J_{ij}(\hat{t})\ d\hat{s}d\hat{t} \otimes \mathcal{F}^\pm.\label{e.c.4d}
\end{align}
Note that $\hat{s} = (s, \bar{s})$ and $\hat{t} = (t, \bar{t})$ and $\mathcal{F}^+ = \sum_{i=1}^3 \hat{E}^{0i}$, $\mathcal{F}^- = \sum_{i=1}^3 \hat{E}^{\tau(i)}$.
In this subsection, we will show that its limit as $\kappa$ goes to infinity is zero.

\begin{lem}\label{l.c.1}
For $(i,j,k) \in C_3$ and $\hat{s} = (s, \bar{s})$, $\hat{t} = (t, \bar{t})$, we have
\begin{align}
\int_{I^4} \kappa^4 &\left\langle  \partial_0^{-1}p_\kappa^{\vec{\sigma}(\hat{t})}, \kappa \varrho^{v,\prime}_{k,s}\partial_j^{-1}p_\kappa^{\vec{\varrho}^{v}_s}\right\rangle
\left\langle \partial_0^{-1} p_\kappa^{\vec{\sigma}(\hat{t})}, \kappa \varrho^{v,\prime}_{i,\bar{s}}\partial_k^{-1}p_\kappa^{\vec{\varrho}^{\bar{v}}_{\bar{s}}}\right\rangle
J_{ij}(\hat{t})\ d\hat{t} d\hat{s}, \label{e.ca.1a}\\
\int_{I^4}\kappa^4 &\left\langle  \partial_0^{-1}p_\kappa^{\vec{\sigma}(\hat{t})}, \kappa \varrho^{v,\prime}_{k,s}\partial_j^{-1}p_\kappa^{\vec{\varrho}^{v}_s}\right\rangle
\left\langle \partial_0^{-1} p_\kappa^{\vec{\sigma}(\hat{t})}, \kappa \varrho^{\bar{v},\prime}_{k,\bar{s}}\partial_i^{-1}p_\kappa^{\vec{\varrho}^{\bar{v}}_{\bar{s}}}\right\rangle
J_{ij}(\hat{t})\ d\hat{t} d\hat{s}, \label{e.ca.1b}\\
\int_{I^4}\kappa^4 &\left\langle  \partial_0^{-1}p_\kappa^{\vec{\sigma}(\hat{t})}, \kappa \varrho^{v,\prime}_{j,s}\partial_k^{-1}p_\kappa^{\vec{\varrho}^{v}_s}\right\rangle
\left\langle \partial_0^{-1} p_\kappa^{\vec{\sigma}(\hat{t})}, \kappa \varrho^{\bar{v},\prime}_{i,\bar{s}}\partial_k^{-1}p_\kappa^{\vec{\varrho}^{\bar{v}}_{\bar{s}}}\right\rangle
J_{ij}(\hat{t})\ d\hat{t} d\hat{s}, \label{e.ca.1c}\\
\int_{I^4}\kappa^4 &\left\langle  \partial_0^{-1}p_\kappa^{\vec{\sigma}(\hat{t})}, \kappa \varrho^{v,\prime}_{j,s}\partial_k^{-1}p_\kappa^{\vec{\varrho}^{v}_s}\right\rangle
\left\langle \partial_0^{-1} p_\kappa^{\vec{\sigma}(\hat{t})}, \kappa \varrho^{\bar{v},\prime}_{k,\bar{s}}\partial_i^{-1}p_\kappa^{\vec{\varrho}^{\bar{v}}_{\bar{s}}}\right\rangle
J_{ij}(\hat{t})\ d\hat{t} d\hat{s},\label{e.ca.1d}
\end{align}
all converge to 0 as $\kappa$ goes to infinity.
\end{lem}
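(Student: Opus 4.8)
The four integrals \eqref{e.ca.1a}--\eqref{e.ca.1d} are structurally identical: each is a product of two inner products of the form $\langle \partial_0^{-1}p_\kappa^{\vec{\sigma}(\hat t)}, \kappa\,\varrho^{\,\prime}\,\partial_a^{-1}p_\kappa^{\vec{\varrho}}\rangle$, with a spatial index $a$ and a velocity component $\varrho^{\,\prime}$, multiplied by $J_{ij}(\hat t)$ and the explicit prefactor $\kappa^4$. Since the index assignments only permute which spatial direction carries a sign function and which carry Gaussians, the powers of $\kappa$ are the same for all four, so I will treat \eqref{e.ca.1a} in detail and indicate that the others follow verbatim. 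The plan is to reduce to a single pair $(\ul^v,\ul^{\bar v})$ localized near its piercings of $S$, factorize both inner products, extract an explicit deficit of $\kappa^{-2}$, and absorb the remaining powers of $\kappa$ into a Gaussian localization of the four integration variables.

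First I would split $S$, exactly as in the two preceding lemmas, into finitely many pieces so that each projected curve pierces each piece at most once, reducing the double sum to a fixed pair $(v,\bar v)$ on one piece; away from the coincidence points the integrand is exponentially small in $\kappa$, so only a neighbourhood of each piercing contributes. On such a piece I would factorize each four-dimensional inner product over the coordinate directions using the appendix lemma of \cite{EH-Lim02} (equivalently Lemma~5 of \cite{CS-Lim03}). The decisive point, absent from the $\bar A_\kappa^\pm$ and $\bar B_\kappa$ computations, is that the surface factor here carries $\partial_0^{-1}$ in \emph{both} inner products: the time direction then yields a bounded sign factor $\langle \partial_0^{-1}q_\kappa^{\sigma_0(\hat t)}, q_\kappa^{\varrho_0}\rangle = \tfrac{\sqrt{2\pi}}{\kappa}\,\sgn(\cdot)$ carrying an explicit $\kappa^{-1}$, the $\partial_j^{-1}$ direction yields a second bounded sign factor (its interior $\kappa$ being absorbed to normalise it to order one), and the remaining two spatial directions give a planar Gaussian. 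Thus each inner product equals $O(\kappa^{-1})$ times bounded signs times a planar Gaussian, so the product of the two inner products is $O(\kappa^{-2})$ times bounded signs times
\[
G_1 G_2 = e^{-\kappa^2\left(|\sigma_i(\hat t)-\varrho_i^v(s)|^2 + |\sigma_k(\hat t)-\varrho_k^v(s)|^2 + |\sigma_i(\hat t)-\varrho_i^{\bar v}(\bar s)|^2 + |\sigma_j(\hat t)-\varrho_j^{\bar v}(\bar s)|^2\right)/8}.
\]

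Combining this $\kappa^{-2}$ with the prefactor $\kappa^4$ leaves a net factor $\kappa^2$, and the whole expression is bounded by $C\kappa^2\int_{I^4}G_1 G_2\,d\hat t\,d\hat s$, the velocities and $J_{ij}$ being uniformly bounded on the compact curves and surface. I would then change variables from $(t,\bar t,s,\bar s)$ to the four Gaussian arguments $u_1=\sigma_i(\hat t)-\varrho_i^v(s)$, $u_2=\sigma_k(\hat t)-\varrho_k^v(s)$, $u_3=\sigma_i(\hat t)-\varrho_i^{\bar v}(\bar s)$, $u_4=\sigma_j(\hat t)-\varrho_j^{\bar v}(\bar s)$; since $\int_{\bR^4}e^{-\kappa^2|u|^2/8}\,du = O(\kappa^{-4})$, the bound becomes $\kappa^2\cdot O(\kappa^{-4}) = O(\kappa^{-2}) \to 0$. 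Heuristically, the two $\partial_0^{-1}$ factors on the surface variable replace two Gaussian (localising) directions by non-localising sign functions while each contributing a $\kappa^{-1}$; this is the exact source of the deficit that is balanced away in the $\bar A_\kappa^\pm$ and $\bar B_\kappa$ cases.

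The main obstacle is the nondegeneracy of this change of variables, i.e.\ that the Jacobian $\partial(u_1,u_2,u_3,u_4)/\partial(t,\bar t,s,\bar s)$ does not vanish at the localisation points. After subtracting the first row from the third (both share the entries $\partial_t\sigma_i,\partial_{\bar t}\sigma_i$ since $\sigma_i(\hat t)$ appears in both $u_1$ and $u_3$), the determinant reduces to a manifestly generic expression, nonzero precisely when the two curves pierce the relevant planar projections of the surface piece transversally --- which is guaranteed, after the surface-splitting, by the time-like hypothesis on $\chi(\oL,\uL)$ and the assumption that $\pi_a(\uL)$ meets $\pi_a(S)$ in finitely many points. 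I would isolate the non-transversal configurations (for instance where a spatial velocity pair $(\varrho_i^{v\prime},\varrho_k^{v\prime})$ degenerates, or the diagonal $v=\bar v$, $s=\bar s$) as a negligible set and bound their contribution by the cruder iterated estimate $\kappa^2\,\sup_{\hat t}\big(\int_I G_1\,ds\big)\int_{I^2}\big(\int_I G_2\,d\bar s\big)J_{ij}\,d\hat t = \kappa^2\cdot O(\kappa^{-1})\cdot O(\kappa^{-2})\to 0$, the last factor using $J_{ij}\,d\hat t = d\sigma_i\,d\sigma_j$ to integrate $G_2$ directly. Since the estimates are uniform and $S$ has finitely many pieces, summing over $(v,\bar v)\in\{1,\dots,\un\}^2$ preserves the limit and yields the claimed convergence to $0$ for all four integrals.
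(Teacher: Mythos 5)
Your proposal stands or falls on the claim that the four integrals \eqref{e.ca.1a}--\eqref{e.ca.1d} are ``structurally identical'' so that the argument for \eqref{e.ca.1a} carries over verbatim; that claim is false, and case \eqref{e.ca.1b} is a genuine gap. The cases differ in which coordinate directions carry Gaussian factors after factorization: writing each inner product as (bounded signs)$\,\times\,\kappa^{-1}\times$(planar Gaussian), the Gaussian directions are $(i,k)$ and $(i,j)$ for \eqref{e.ca.1a}, $(i,j)$ and $(i,j)$ for \eqref{e.ca.1c}, $(i,j)$ and $(j,k)$ for \eqref{e.ca.1d}, but $(i,k)$ and $(j,k)$ for \eqref{e.ca.1b}. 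Since the only two-dimensional substitution the surface provides is $\hat t\mapsto(\sigma_i,\sigma_j)$ (the weight $J_{ij}$ is the Jacobian of exactly this map), a factor can contribute $O(\kappa^{-2})$ through the surface integral only if \emph{both} of its Gaussian directions are $(i,j)$; otherwise 1-D integrations with velocity weights give $O(\kappa^{-1})$ each and no more. This is why your fallback estimate $\kappa^2\cdot O(\kappa^{-1})\cdot O(\kappa^{-2})$ works for \eqref{e.ca.1a}, for \eqref{e.ca.1c}, and (after interchanging the roles of the two factors) for \eqref{e.ca.1d} --- indeed it is essentially the paper's own proof for \eqref{e.ca.1a} and \eqref{e.ca.1d}. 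But in case \eqref{e.ca.1b} the transverse direction $k$ occurs in \emph{both} factors, neither factor has Gaussian directions $(i,j)$, and every iterated $L^\infty$--$L^1$ scheme (including your fallback, in either order) tops out at $\kappa^2\cdot O(\kappa^{-1})\cdot O(\kappa^{-1})=O(1)$. What saves \eqref{e.ca.1b} is the coupling of the two $k$-Gaussians through the common coordinate $\sigma_k(\hat t)$: the paper changes variables $\hat s\mapsto(\varrho^v_{k,s},\varrho^{\bar v}_{k,\bar s})$ (using precisely the two $k$-velocity weights), so that these Gaussians become a planar Gaussian centered on the diagonal $D$, splits off a strip $D_\epsilon$, bounds the surface integral by a constant, and sends $\kappa\to\infty$ then $\epsilon\to 0$; equivalently one can use $e^{-\kappa^2(|\sigma_k-\varrho^v_{k,s}|^2+|\sigma_k-\varrho^{\bar v}_{k,\bar s}|^2)/8}\le e^{-\kappa^2|\varrho^v_{k,s}-\varrho^{\bar v}_{k,\bar s}|^2/16}$ and dominated convergence, the mechanism the paper uses for \eqref{e.ca.1c}. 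No trace of this idea appears in your proposal.

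Your primary mechanism for \eqref{e.ca.1a}, the substitution $(t,\bar t,s,\bar s)\mapsto(u_1,\dots,u_4)$, is also not salvageable as stated. The hypotheses (time-likeness, finitely many intersections of the projections) give no lower bound on that Jacobian --- transversality of a piercing constrains different derivative data --- and the degeneracies are not confined to a negligible set: for \eqref{e.ca.1c} with $v=\bar v$ the analogous map has $u_1-u_3$ and $u_2-u_4$ depending only on $(s,\bar s)$, and subtracting rows shows its Jacobian vanishes \emph{identically} on the diagonal $\{s=\bar s\}$ (the reduced rows $(0,0,\varrho^{v,\prime}_{i},-\varrho^{v,\prime}_{i})$ and $(0,0,\varrho^{v,\prime}_{j},-\varrho^{v,\prime}_{j})$ are proportional), which is exactly where the integrand concentrates; the correct local count there is $\kappa^2\cdot O(\kappa^{-3})$, not your uniform $\kappa^2\cdot O(\kappa^{-4})$. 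Finally, a repairable but real slip: having bounded the velocities out front, your claim $\sup_{\hat t}\int_I G_1\,ds=O(\kappa^{-1})$ is unjustified at parameters where the projected velocity vanishes; the $O(\kappa^{-1})$ bound needs the weight $|\varrho^{v,\prime}_{k,s}|$ kept \emph{inside} the integral, since it is what legitimizes the substitution $s\mapsto\varrho^v_{k,s}$. In short: with velocity weights retained, your fallback handles \eqref{e.ca.1a}, \eqref{e.ca.1c}, \eqref{e.ca.1d}, but the lemma is not proved until you supply the diagonal-coupling argument for \eqref{e.ca.1b}.
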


\begin{proof}
We first deal with Expression \ref{e.ca.1a}. Now,
\begin{align*}
\kappa^2&\left\langle  \partial_0^{-1}p_\kappa^{\vec{\sigma}(\hat{t})}, \kappa \varrho^{v,\prime}_{k,s}\partial_j^{-1}p_\kappa^{\vec{\varrho}^{v}_s}\right\rangle \\
&= -\left\langle q_\kappa^{\sigma_0(\hat{t})}, \kappa\partial_0^{-1} q_\kappa^{\varrho_{0,s}^v} \right\rangle
\left\langle q_\kappa^{\sigma_j(\hat{t})}, \kappa\partial_0^{-1} q_\kappa^{\varrho_{j,s}^v} \right\rangle
\kappa e^{-\kappa^2|\sigma_i(\hat{t}) - \varrho_{i,s}^v|^2/8}e^{-\kappa^2|\sigma_k(\hat{t}) - \varrho_{k,s}^v|^2/8} \varrho^{v,\prime}_{k,s}, \label{e.ca.2}
\end{align*}
and
\begin{align*}
\kappa^2&\left\langle  \partial_0^{-1}p_\kappa^{\vec{\sigma}(\hat{t})}, \kappa \varrho^{\bar{v},\prime}_{i,\bar{s}}\partial_k^{-1}p_\kappa^{\vec{\varrho}^{\bar{v}}_{\bar{s}}}
\right\rangle \\
&= -\left\langle q_\kappa^{\sigma_0(\hat{t})}, \kappa\partial_0^{-1} q_\kappa^{\varrho_{0,s}^v} \right\rangle
\left\langle q_\kappa^{\sigma_k(\hat{t})}, \kappa\partial_0^{-1} q_\kappa^{\varrho_{k,\bar{s}}^{\bar{v}}} \right\rangle
\kappa e^{-\kappa^2|\sigma_i(\hat{t}) - \varrho_{i,\bar{s}}^{\bar{v}}|^2/8}e^{-\kappa^2|\sigma_j(\hat{t}) - \varrho_{j,\bar{s}}^{\bar{v}}|^2/8} \varrho^{\bar{v},\prime}_{i,\bar{s}}.
\end{align*}
Note that we used $\langle \partial_0^{-1}f, g \rangle = -\langle f, \partial_0^{-1}g \rangle$.

Since $\langle q_\kappa^x, \kappa \partial_0^{-1} q_\kappa^y \rangle $ is bounded by a constant from Item 1 in the lemma in the appendix in \cite{EH-Lim02}, or from Item 2 in Lemma 5 in \cite{CS-Lim03}, it suffices to prove that
\begin{align*}
\tilde{C}_1:= \int_{I^4}d\hat{t} d\hat{s}\ J_{ij}(\hat{t})& \varrho^{v,\prime}_{k,s}
\varrho^{\bar{v},\prime}_{i,\bar{s}} \\
\times&\ \kappa^2 e^{-\kappa^2|\sigma_i(\hat{t}) - \varrho_{i,s}^v|^2/8}e^{-\kappa^2|\sigma_k(\hat{t}) - \varrho_{k,s}^v|^2/8}
e^{-\kappa^2|\sigma_i(\hat{t}) - \varrho_{i,\bar{s}}^{\bar{v}}|^2/8}e^{-\kappa^2|\sigma_j(\hat{t}) - \varrho_{j,\bar{s}}^{\bar{v}}|^2/8}
\end{align*}
goes to 0 as $\kappa$ goes to infinity.
Now,
\begin{align*}
|\tilde{C}_1| \leq& \kappa^2\int_{I^3}
 e^{-\kappa^2|\sigma_i(\hat{t}) - \varrho_{i,\bar{s}}^{\bar{v}}|^2/8}e^{-\kappa^2|\sigma_j(\hat{t}) - \varrho_{j,\bar{s}}^{\bar{v}}|^2/8}
\int_I \left| e^{-\kappa^2|\sigma_i(\hat{t}) - \varrho_{i,s}^v|^2/8}e^{-\kappa^2|\sigma_k(\hat{t}) - \varrho_{k,s}^v|^2/8} \varrho^{v,\prime}_{k,s} \right| ds\\
& \times \
\left| \varrho^{\bar{v},\prime}_{i,\bar{s}}J_{ij}(\hat{t})\right| \ d\hat{t} d\bar{s} \\
\leq& \kappa^2\int_{I^3}
 e^{-\kappa^2|\sigma_i(\hat{t}) - \varrho_{i,\bar{s}}^{\bar{v}}|^2/8}e^{-\kappa^2|\sigma_j(\hat{t}) - \varrho_{j,\bar{s}}^{\bar{v}}|^2/8}
\int_I \left| e^{-\kappa^2|\sigma_k(\hat{t}) - \varrho_{k,s}^v|^2/8} \varrho^{v,\prime}_{k,s} \right| ds\\
& \times \
\left| \varrho^{\bar{v},\prime}_{i,\bar{s}}J_{ij}(\hat{t})\right| \ d\hat{t} d\bar{s} \\
\leq& 2\kappa\sqrt{2\pi}\int_{I^3}
 e^{-\kappa^2|\sigma_i(\hat{t}) - \varrho_{i,\bar{s}}^{\bar{v}}|^2/8}e^{-\kappa^2|\sigma_j(\hat{t}) - \varrho_{j,\bar{s}}^{\bar{v}}|^2/8}\ \left| \varrho^{\bar{v},\prime}_{i,\bar{s}}J_{ij}(\hat{t})\right| \ d\hat{t} d\bar{s} \\
\longrightarrow& 0,
\end{align*}
as $\kappa$ goes to 0. This shows that Expression \ref{e.ca.1a} goes to 0. The proof for Expression \ref{e.ca.1d} is similar.

To prove that Expression \ref{e.ca.1c} goes to 0, similar to the above argument, it suffices to show that
\begin{align*}
\tilde{C}_2 := \int_{I^4}d\hat{t} d\hat{s}\ J_{ij}(\hat{t})& \varrho^{v,\prime}_{j,s}
\varrho^{\bar{v},\prime}_{i,\bar{s}}\\
\times &\ \kappa^2 e^{-\kappa^2|\sigma_i(\hat{t}) - \varrho_{i,s}^v|^2/8}e^{-\kappa^2|\sigma_j(\hat{t}) - \varrho_{j,s}^v|^2/8}
e^{-\kappa^2|\sigma_i(\hat{t}) - \varrho_{i,\bar{s}}^{\bar{v}}|^2/8}e^{-\kappa^2|\sigma_j(\hat{t}) - \varrho_{j,\bar{s}}^{\bar{v}}|^2/8}
\end{align*}
goes to 0 as $\kappa$ goes to infinity.

Let $\omega : \hat{t} \equiv (t, \bar{t}) \mapsto (\sigma_i(\hat{t}), \sigma_j(\hat{t}))$. And let $\hat{S} = \omega(I^2) \subset \bR^2$. A simple change of variable argument and apply Item 2 in the lemma in the appendix of \cite{EH-Lim02}, or using Item 4 in Lemma 5 in \cite{CS-Lim03}, will yield
\begin{align*}
|\tilde{C}_2| \leq& \kappa^2 \int_{I^2} \int_{\hat{S}} e^{-\kappa^2|\omega - \hat{\varrho}_k^{v}(s)|^2/8}e^{-\kappa^2|\omega - \hat{\varrho}_k^{\bar{v}}(\bar{s})|^2/8} d\omega \ \left| \varrho^{v,\prime}_{j,s}
\varrho^{\bar{v},\prime}_{i,\bar{s}} \right| ds d\bar{s} \\
\leq& 4\pi \frac{\kappa^2}{4\pi} \int_{I^2} \int_{\bR^2} e^{-\kappa^2|\omega - \hat{\varrho}_k^{v}(s)|^2/8}e^{-\kappa^2|\omega - \hat{\varrho}_k^{\bar{v}}(\bar{s})|^2/8} d\omega \ \left| \varrho^{v,\prime}_{j,s}
\varrho^{\bar{v},\prime}_{i,\bar{s}} \right| ds d\bar{s} \\
\leq& 4\pi\int_{I^2} e^{-\kappa^2|\hat{\varrho}_k^{v}(s) - \hat{\varrho}_k^{\bar{v}}(\bar{s})|^2/16}|\varrho^{v,\prime}_{j,s}
\varrho^{\bar{v},\prime}_{i,\bar{s}}| ds d\bar{s},
\end{align*}
which converges to 0 as $\kappa$ goes to infinity.

To prove that Expression \ref{e.ca.1b} goes to 0, similar to the above argument, it suffices to show that
\begin{align*}
\tilde{C}_3 := \int_{I^4}d\hat{t} d\hat{s}\ J_{ij}(\hat{t})& \varrho^{v,\prime}_{k,s}
\varrho^{\bar{v},\prime}_{k,\bar{s}}\\
\times&\ \kappa^2 e^{-\kappa^2|\sigma_i(\hat{t}) - \varrho_{i,s}^v|^2/8}e^{-\kappa^2|\sigma_k(\hat{t}) - \varrho_{k,s}^v|^2/8}
e^{-\kappa^2|\sigma_j(\hat{t}) - \varrho_{j,\bar{s}}^{\bar{v}}|^2/8}e^{-\kappa^2|\sigma_k(\hat{t}) - \varrho_{k,\bar{s}}^{\bar{v}}|^2/8}
\end{align*}
goes to 0 as $\kappa$ goes to infinity.

Let $\omega: \hat{s} \equiv (s,\bar{s}) \mapsto y = (y_+, y_-)$ with $s \mapsto y_+ = \varrho_{k,s}^v$, $\bar{s} \mapsto y_- = \varrho_{k,\bar{s}}^{\bar{v}}$.
And define $x_+: y_+ \mapsto \varrho_{i,s}^v$, $x_-: y_- \mapsto \varrho_{j,\bar{s}}^{\bar{v}}$. Let $x(\omega) = (x_+(\omega), x_-(\omega))$.

After a change of variables, the above expression becomes \beq \tilde{C}_3 = \int_{I^2}\int_{\omega(I^2)} \kappa^2 e^{-\kappa^2|\hat{\sigma}_k(\hat{t}) - x(\omega)|^2/8} e^{-\kappa^2|\omega - (\sigma_k(\hat{t}), \sigma_k(\hat{t}))|^2/8}J_{ij}(\hat{t})\ d\omega d\hat{t} . \nonumber \eeq

Let $D \subset \bR^2$ such that $D = \{(s, s): s\in \bR\}$ and let $D_\epsilon$ be a diagonal strip, center at $D$, with width $2\epsilon$. Think of a vertical strip center at the $y$-axis, with width $2\epsilon$. Then rotate this strip clockwise 45 degrees, and we will obtain $D_\epsilon$.

Since $\omega(I^2)$ is bounded, we see that $D_\epsilon \cap \omega(I^2)$ will have area that tends to 0 as $\epsilon$ goes to 0. And for any $\omega \in D_\epsilon^c$, we have $|\omega - (s,s)| \geq \epsilon$ for any $(s,s) \in D$.

Thus,
\begin{align*}
|\tilde{C}_3| \leq& \int_{I^2}\int_{\omega(I^2)\cap D_\epsilon} \kappa^2 e^{-\kappa^2|\hat{\sigma}_k(\hat{t}) - x(\omega)|^2/8} e^{-\kappa^2|\omega - (\sigma_k(\hat{t}), \sigma_k(\hat{t}))|^2/8}|J_{ij}|(\hat{t})\ d\omega d\hat{t}  \\
&+  \int_{I^2}\int_{\omega(I^2)\cap D_\epsilon^c} \kappa^2 e^{-\kappa^2|\hat{\sigma}_k(\hat{t}) - x(\omega)|^2/8} e^{-\kappa^2|\omega - (\sigma_k(\hat{t}), \sigma_k(\hat{t}))|^2/8}|J_{ij}|(\hat{t})\  d\omega d\hat{t}  \\
\leq& \int_{\omega(I^2)\cap D_\epsilon}\int_{I^2} \kappa^2 e^{-\kappa^2|\hat{\sigma}_k(\hat{t}) - x(\omega)|^2/8} |J_{ij}|(\hat{t})\ d\hat{t} d\omega   \\
&+  \int_{\omega(I^2)\cap D_\epsilon^c}\int_{I^2} \kappa^2 e^{-\kappa^2|\hat{\sigma}_k(\hat{t}) - x(\omega)|^2/8} e^{-\kappa^2\epsilon^2/8}|J_{ij}|(\hat{t})\ d\hat{t} d\omega   \\
\leq& 8\pi\left[ {\rm Area}(\omega(I^2)\cap D_\epsilon) + e^{-\kappa^2\epsilon^2/8} \cdot {\rm Area}(\omega(I^2)) \right].
\end{align*}
Let $\kappa$ goes to infinity, then we have \beq \limsup_{\kappa \rightarrow \infty}|\tilde{C}_3| \leq 8\pi \cdot {\rm Area}(\omega(I^2)\cap D_\epsilon), \nonumber \eeq and this holds for any $\epsilon$. Since the RHS goes to 0 as $\epsilon$ goes to 0, we must have the limit of $\tilde{C}_3$ is 0 as $\kappa$ goes to infinity.
\end{proof}

\begin{thm}(Main Theorem)\label{t.c.1} \\
Let $S$ be any closed and bounded orientable surface in $\bR \times \bR^3$ disjoint from a time-like geometric hyperlink $\uL$. Assume that $\pi_a(\uL)$ intersect $\pi_a(S)$ at finitely many points.

Using Expression \ref{e.ehc.1}, we can now define an operator $\hat{F}_S$,
\begin{align*}
\hat{F}_S[Z(q; \chi(\oL, \uL))] :=& -i\sqrt{4\pi}\ {\rm lk}(\uL, S) \otimes \left[ \sum_{i=1}^3 \hat{E}^{0i} - \sum_{j=1}^3\hat{E}^{\tau(j)} \right]Z(q; \chi(\oL, \uL)) \\
\equiv &  -i\sqrt{4\pi}\ {\rm lk}(\uL, S) Z(q; \chi(\oL, \uL)) \otimes \left[ \sum_{i=1}^3 \hat{E}^{0i} - \sum_{j=1}^3\hat{E}^{\tau(j)} \right].
\end{align*}
Note that the Wilson Loop observable of the colored hyperlink $\chi(\oL, \uL)$, $Z(q; \chi(\oL, \uL)) \in \bR$ was defined by Equation (\ref{e.sux.3}).
\end{thm}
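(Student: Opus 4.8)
The plan is to assemble the four convergence results already established for the pieces of Expression \ref{e.sux.4} and then to factor the $\kappa\to\infty$ limit of a product into a product of limits. Recall that, by the definition taken from \cite{EH-Lim02}, the operator $\hat{F}_S[Z(q; \chi(\oL, \uL))]$ \emph{is} the limit of Expression \ref{e.sux.4}, which is the product of a \emph{curvature factor} $(\bar{A}_\kappa^+ + \bar{A}_\kappa^-) + \bar{B}_\kappa + (\bar{C}_\kappa^+ + \bar{C}_\kappa^-)$, taking values in $\mathfrak{su}(2)\times\mathfrak{su}(2)$, and a \emph{Wilson loop factor} $\prod_{u=1}^{\on}[\Tr_{\rho_u^+}\hat{\mathcal{W}}_\kappa^+(q;\ol^u,\uL) + \Tr_{\rho_u^-}\hat{\mathcal{W}}_\kappa^-(q;\ol^u,\uL)]$, which is scalar-valued since the traces $\Tr_{\rho_u^\pm}$ collapse the Lie-algebra insertion $\mathcal{E}^\pm$ to a number.

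First I would handle the two factors independently. For the Wilson loop factor, Equation (\ref{e.sux.3}) already gives convergence to the real number $Z(q; \chi(\oL, \uL))$. For the curvature factor, I would simply add the three limits proved above: Corollary \ref{c.c.3} gives $\bar{A}_\kappa^+ + \bar{A}_\kappa^- \to -\frac{3i\sqrt\pi}{2}\,{\rm lk}(\uL, S) \otimes [\sum_{i=1}^3 \hat{E}^{0i} - \sum_{j=1}^3\hat{E}^{\tau(j)}]$; Corollary \ref{c.c.2} gives $\bar{B}_\kappa \to -\frac{i\sqrt\pi}{2}\,{\rm lk}(\uL, S) \otimes [\sum_{i=1}^3 \hat{E}^{0i} - \sum_{j=1}^3\hat{E}^{\tau(j)}]$; and Lemma \ref{l.c.1} gives $\bar{C}_\kappa^+ + \bar{C}_\kappa^- \to 0$.

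Then the coefficient bookkeeping finishes the curvature factor: since $-\frac{3}{2}-\frac{1}{2} = -2$, it converges to $-2i\sqrt\pi\,{\rm lk}(\uL, S) \otimes [\sum_{i=1}^3 \hat{E}^{0i} - \sum_{j=1}^3\hat{E}^{\tau(j)}]$, and because $2\sqrt\pi = \sqrt{4\pi}$ this is exactly $-i\sqrt{4\pi}\,{\rm lk}(\uL, S) \otimes [\sum_{i=1}^3 \hat{E}^{0i} - \sum_{j=1}^3\hat{E}^{\tau(j)}]$. Finally, since the curvature factor converges in the fixed finite-dimensional space $\mathfrak{su}(2)\times\mathfrak{su}(2)$ and the Wilson loop factor converges in $\bR$, multiplication (tensoring a convergent Lie-algebra-valued sequence by a convergent scalar sequence) is jointly continuous, so the limit of the product equals the product of the limits. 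This yields precisely the claimed expression and thereby defines $\hat{F}_S$.

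The main point to watch is \emph{not} analytic: all the hard work — the Gaussian localization and the identification of the limits with the piercing/linking number ${\rm lk}(\uL,S)$ — already resides in the preceding lemmas and corollaries. The only genuine step here is the justification for factoring the $\kappa\to\infty$ limit across the product of the curvature factor and the Wilson loop factor, which is legitimate exactly because each factor converges on its own. Everything else is the constant-chasing whose crux is the identity $\frac{3\sqrt\pi}{2} + \frac{\sqrt\pi}{2} = 2\sqrt\pi = \sqrt{4\pi}$.
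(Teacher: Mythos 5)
Your proposal is correct and is essentially the paper's own proof: the paper's argument is the one-line statement that the theorem ``follows applying Corollaries \ref{c.c.3} and \ref{c.c.2} and Lemma \ref{l.c.1} to Expression \ref{e.sux.4},'' which is exactly your decomposition into the curvature factor and the Wilson loop factor. You merely make explicit what the paper leaves implicit, namely the arithmetic $-\tfrac{3i\sqrt{\pi}}{2}-\tfrac{i\sqrt{\pi}}{2} = -i\sqrt{4\pi}$ and the (standard) fact that the limit of the product equals the product of the limits since each factor converges.
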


\begin{proof}
This follows applying Corollaries \ref{c.c.3} and \ref{c.c.2} and Lemma \ref{l.c.1} to Expression \ref{e.sux.4}.
\end{proof}

\begin{rem}
\begin{enumerate}
  \item We can write the operator as \beq \hat{F}_S := -i\sqrt{4\pi}\ {\rm lk}(\cdot, S)  \otimes \left[ \sum_{i=1}^3 \hat{E}^{0i} - \sum_{j=1}^3\hat{E}^{\tau(j)} \right], \nonumber \eeq which acts on the Wilson Loop observable of the colored hyperlink $\chi(\oL, \uL)$.
  \item From Theorem \ref{t.c.1}, the operator $\hat{F}_S$ is not a two form with values in $\Lambda^2(V)$, but is a $\mathfrak{su}(2) \times \mathfrak{su}(2)$-valued operator, which depends on a surface $S$.
\end{enumerate}
\end{rem}

Recall that $F_S(\omega)$ is the integral of the curvature over a surface $S$. By averaging all possible spin connections $\omega$ in $\bR^4$, Expression \ref{e.ehc.1} quantizes the total curvature of a surface $S$, into an operator $\hat{F}_S$, which takes only discrete values because the linking number between a surface $S$ and the geometric hyperlink $\uL$ is discrete.

\section{Application to Quantum Cosmology}

Thiemann described in \cite{Thiemann:2002nj}, the problems General Relativity faced. Solving Einstein's Equations will lead to singularities, hence predicting the existence of black holes. The Friedmann-Robertson-Walker (FRW) model, which describes the universe, also faces certain problems.

In this model, the metric is given by \beq ds^2 = -dt^2 +R(t)^2\left[\frac{dr^2}{1-kr^2} + r^2d\Omega_2^2 \right]. \nonumber \eeq The constant $k$ will determine if we have closed, flat or open universe. It will be closed, flat or open, if $k = 1, 0, -1$ respectively. We will focus on $R(t)$, which depends on time. Einstein's Equations predict that $\lim_{t \rightarrow 0^+}R(t) = 0$, resulting in the metric being singular, giving rise to the idea of the classical Big Bang singularity. Now curvature is actually proportional to $1/R(t)^2$, so indeed this singularity is a curvature singularity.

One way to resolve this singularity is to replace it with a `quantum bounce', as introduced by the authors in \cite{PhysRevD.77.024046}. They argued that a FRW model is only valid till very early times. As curvature increases even further, loop quantum gravity takes over and creates an effective repulsive force, overwhelms classical gravitational attraction, and causes a `quantum bounce'. Their theory effectively replaces the big bang which occurs at the start of time, hence resolving the singularity as mentioned above. See \cite{PhysRevD.73.124038}. This is one of the major advancement in Loop Quantum Gravity since 2003. We wish to highlight that in the articles \cite{PhysRevLett.96.141301} and \cite{PhysRevD.73.124038}, the authors seemed to be working in a 3-dimensional manifold, instead of a 4-dimensional manifold, just as quantum gravity theorists who used spin networks also worked in a 3-dimensional space. For a good review of spin networks, we refer the reader to \cite{Baez:1999sr}, \cite{Baez:1999:Online}, \cite{Baez1996253} and \cite{Rovelli:1995ac}.

We can also resolve this singularity using the quantized curvature we obtained earlier. In the FRW model, curvature will be continuous with respect to time. Fix a surface $S \subset \bR^3$. If we choose $\omega$ to be the Levi-Civita connection associated with the metric in the FRW model, then total curvature $F_S(\omega)$ given in Equation (\ref{e.r.2}) will vary continuously with respect to time $t$, and approach infinity as $t \rightarrow 0^+$. If we deform the surface $S$ continuously, then $F_S(\omega)$ will vary continuously.

Quantum gravity, on the other hand, paints us a different picture. As time decreases, a highly curved universe will imply that the space is getting smaller and we are approaching the regime of quantum gravity. When quantum gravity takes over, the total curvature is actually quantized and the quantized operator $\hat{F}_S$ takes discrete values for different surfaces $S$, as the linking number between a surface $S$ and the geometric hyperlink is discrete. Thus, we cannot have continuum values of total curvature, hence the metric given above no longer holds in short distances or high energy and the FRW model is no longer valid at distances and energies in the quantum gravity regime.



\nocite{*}



\end{document}